  \theoremstyle{definition}
  \newtheorem{defn}{\protect\definitionname}
  \theoremstyle{plain}
  \newtheorem*{thm*}{\protect\theoremname}
  \theoremstyle{plain}
  \newtheorem{fact}{\protect\factname}
  \theoremstyle{plain}
  \newtheorem{cor}{\protect\corollaryname}
  \theoremstyle{remark}
  \newtheorem{claim}{\protect\claimname}
  \providecommand{\claimname}{Claim}
  \providecommand{\definitionname}{Definition}
  \providecommand{\factname}{Fact}
  \providecommand{\theoremname}{Theorem}
\providecommand{\corollaryname}{Corollary}
\begin{document}

\title{On the local equivalence of complete bipartite and repeater graph
states}

\author{Ilan Tzitrin}
\email{itzitrin@physics.utoronto.ca}

\selectlanguage{english}%

\affiliation{Department of Physics, University of Toronto, Toronto, Ontario, M5S
1A7}
\begin{abstract}
Classifying locally equivalent graph states, and stabilizer states
more broadly, is a significant problem in the theories of quantum
information and multipartite entanglement. A special focus is given
to those graph states for which equivalence through local unitaries
implies equivalence through local Clifford unitaries (LU $\Leftrightarrow$
LC). Identification of locally equivalent states in this class is
facilitated by a convenient transformation rule on the underlying
graphs and an efficient algorithm. Here we investigate the question
of local equivalence of the graph states behind the all-photonic quantum
repeater. We show that complete bipartite graph (biclique) states,\emph{
imperfect} repeater graph states and small ``crazy graph'' states
satisfy LU $\Leftrightarrow$ LC. We continue by discussing biclique
states more generally and placing them in the context of counterexamples
to the LU-LC Conjecture. To this end, we offer some alternative proofs
and clarifications on existing results.
\end{abstract}
\maketitle

\section{Introduction}

Graph states are a subset of multipartite entangled states used pervasively
in quantum information protocols: quantum error-correcting codes,
measurement-based quantum computing, entanglement purification, information
splitting, quantum cryptography, and all-photonic quantum repeaters,
among others \citep{Briegel2001,Hein2004,Hein2006a,Varnava2006,Browne2005,Nielsen2006,Aschauer2005b,Aschauer2005,Kruszynska2006,Muralidharan2008,Keet2010,Chen2004,Dur2005,Azuma2015}.
To define a graph state one starts by specifying the underlying (undirected)
graph, $G$, through a set of vertices, $V$, and a collection of
edges, $E$. Vertices are then associated with qubits and edges with
entangling gates:
\begin{equation}
\left|G\right\rangle \equiv\prod_{\left\{ a,b\right\} \in E}\text{CZ}^{ab}\left|+\right\rangle ^{V},\label{eq:gdef}
\end{equation}
where $\text{CZ}^{ab}$ is the controlled phase gate with control
qubit $a$ and target qubit $b$, $\left|+\right\rangle \equiv\frac{1}{\sqrt{2}}\left(\left|0\right\rangle +\left|1\right\rangle \right)$,
and we use the notation $\left|\psi\right\rangle ^{V}\equiv\bigotimes_{a\in V}\left|\psi\right\rangle ^{a}$
\citep{Hein2006a}. The value of graph states lies in the ability
of local measurements to transform them in specified ways. In fact,
any two qubits in a connected graph state can be projected to a Bell
pair through a sequence of Pauli $X$ and $Z$ measurements, in what
is termed\emph{ localizable entanglement }\citep{Hein2006a}\emph{.}

A study of the properties of graph states is an active area of research
made difficult by their size and complexity. As the number of qubits
increases, determining equivalence classes of graph states under local
unitaries (LUs) or invertible stochastic local operations and classical
communication (SLOCC) quickly becomes intractable \citep{Zeng2007a,Verstraete2002}.

On the other hand, because graph states are tied to the stabilizer
formalism, as defined below, there are advantages in looking at a
smaller class of operations, the \emph{local Clifford unitaries }(LCs),
which map the group of Pauli operators to itself under conjugation.
For one, an LC operation effects a simple graphical transformation,
the \emph{local complementation}, on the underlying graph, as shown
in Fig. \ref{fig:localcomp} \citep{Nest2003}. Furthermore, there
is a classical polynomial-time algorithm that decides whether two
given graphs are related by a sequence of local complementations \citep{Bouchet1991,VanNest2004}.

Initially, it was conjectured that two graph states equivalent under
local unitaries are also equivalent under local Clifford gates, that
is, LU $\Leftrightarrow$ LC \citep{Nest2003,Krueger2005}. One direction
of the conjecture - that LC equivalence implies LU equivalence - is
trivial, since local Clifford gates are local unitaries themselves.
The other direction, however, is false, with the smallest counterexample
known at 27 qubits \citep{Ji2007,Guhne2017}. 

In this paper, we seek to modestly expand the class of states for
which the status of LU $\Leftrightarrow$ LC is known. After overviewing
the stabilizer formalism and some important existing results on the
local equivalence of graph states in Section \ref{sec:Preliminaries},
we use Section \ref{subsec:CBR} to present our findings on the repeater
graph states displayed in Fig. \ref{fig:rgs}. These are the photonic
states used in \citep{Azuma2015} to eschew quantum memories in traditional
matter-based quantum repeaters. Section \ref{subsec:Bicliques} is
reserved for a broader discussion of biclique states (see Fig. \ref{fig:biclique}),
which appear in protocols beyond all-photonic repeaters \citep{Chen2004,Aschauer2005,Dur2005,Rudolph2016b}.
Finally, in, Section \ref{subsec:counterex}, we clarify the result
on counterexamples to the LU-LC Conjecture by Ji et al. \citep{Ji2007}
and place it in the context of bipartite graphs.

\begin{figure}
\begin{centering}
\includegraphics[scale=0.25]{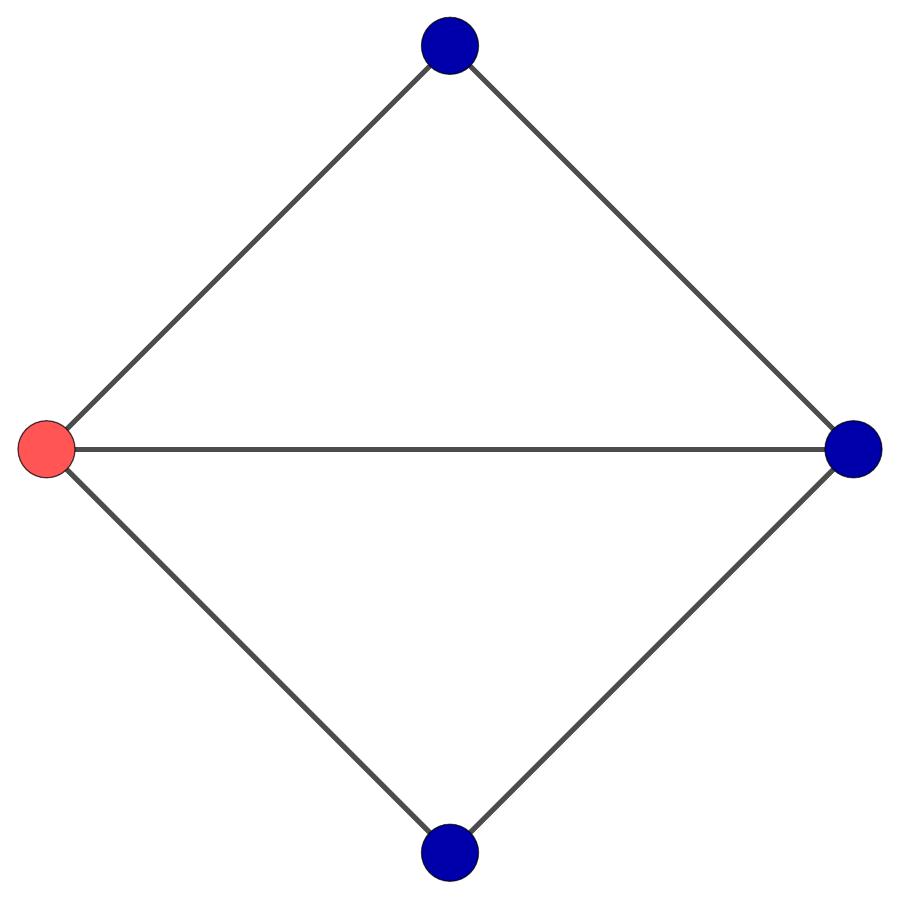}$\overset{\text{LC}}{\longleftrightarrow}$\includegraphics[scale=0.25]{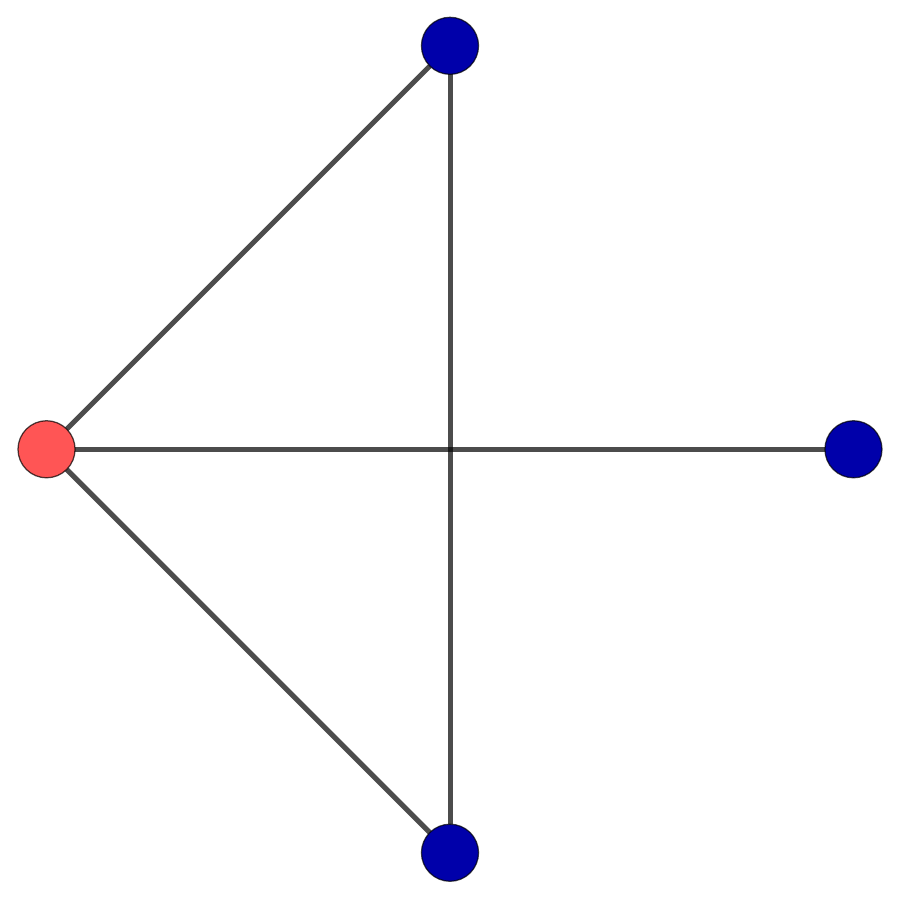}
\par\end{centering}
\caption{Applying a local complementation to the red (lighter) vertex in either
graph produces the other graph. \label{fig:localcomp} The effect
of a local complementation on a vertex is to remove the existing edges
among its neighbours and add the missing ones. In light of the LC
Rule in Section \ref{subsec:Review-of-results}, the letters LC in
the figure can refer both to a local complementation and to an appropriate
local Clifford gate.}
\end{figure}
\begin{figure}
\begin{centering}
\subfloat[]{\centering{}\includegraphics[scale=0.45]{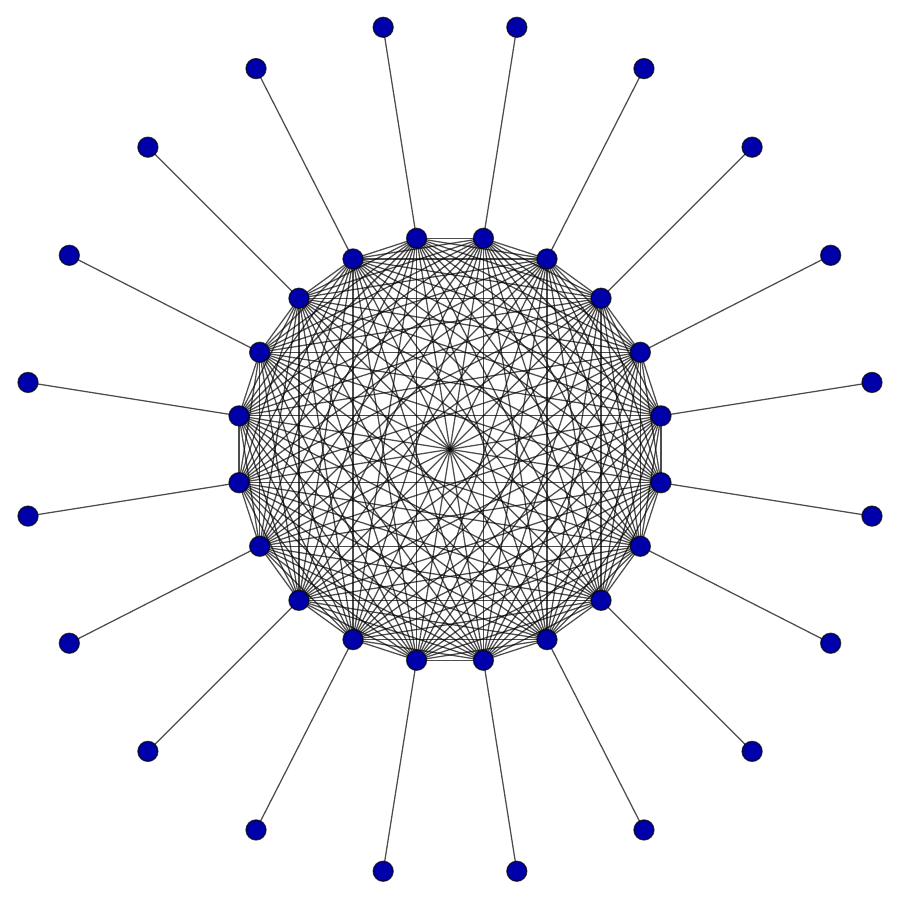}}\subfloat[]{\begin{centering}
\includegraphics[scale=0.45]{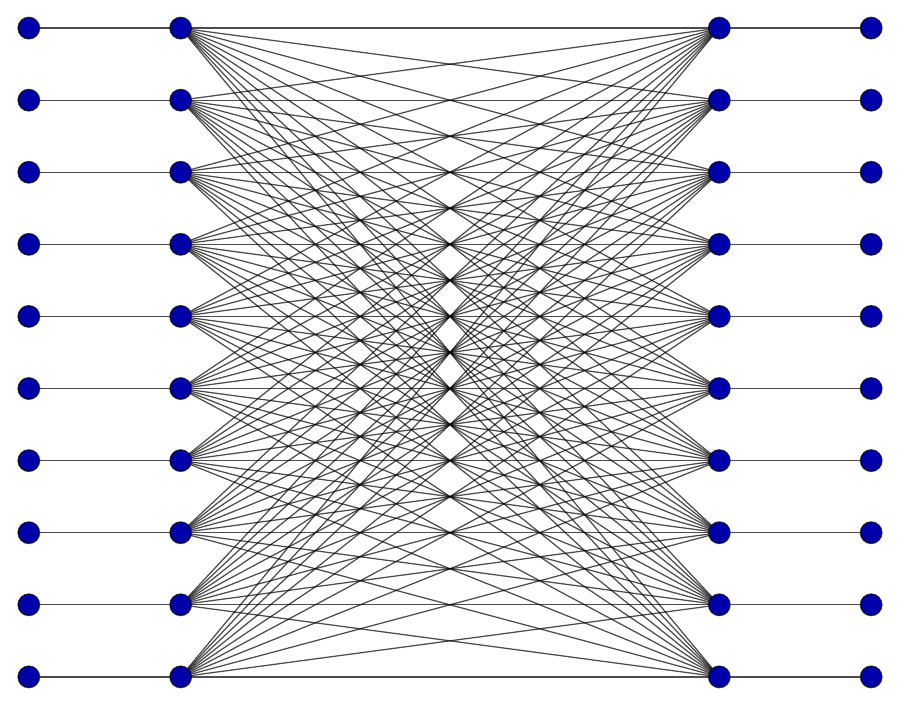}
\par\end{centering}
}
\par\end{centering}
\caption{20-vertex repeater graphs: (a) a complete graph and (b) a biclique
(or complete bipartite graph) with leaves appended to each vertex.
In our context, the graphs underlie all-photonic quantum repeater
graph states $\left|R_{C}^{20}\right\rangle $ (from \citep{Azuma2015})
and $\left|R_{B}^{20}\right\rangle $, respectively. The vertices
correspond to qubits, while the edges represent entanglement localizable
through sequences of single-qubit measurements. \label{fig:rgs}}
\end{figure}

\section{Preliminaries \label{sec:Preliminaries}}

In this section, we provide the necessary preliminaries to understand
our results. We describe the stabilizer formalism and then enumerate
some known results on the LU-LC question, which we use as a foundation
for the conclusions in Section \ref{sec:Results}.

\subsection{Stabilizer formalism and Minimal Support Condition}

The \emph{n-qubit Pauli group}, $\mathcal{P}_{n}$, consists of operators
of the form $P_{1}\otimes\cdots\otimes P_{n}$, where the $P_{i}$s
are either the Pauli matrices or the $2\times2$ identity matrix,
and we allow a phase of $\pm1$ or $\pm i$. A \emph{stabilizer }is
a commutative subgroup of $\mathcal{P}_{n}$ so that $-I$ is not
an element of the group. A \emph{stabilizer state }is the unique eigenstate
(with eigenvalue 1) of its stabilizer \citep{Nielsen2010}. A graph
state, $\left|G\right\rangle $, can be defined equivalently as the
state stabilized by the set of observables $\left\{ K_{a}\vert a\in V_{G}\right\} $,
where
\begin{equation}
K_{a}\equiv X^{a}\bigotimes_{b\in N_{a}}Z^{N_{a}},
\end{equation}

and $N_{a}$ denotes the\emph{ neighbourhood} of $a$: the set of
vertices connected (adjacent) to $a$, not including $a$ itself \citep{Hein2006a}.
The stabilizer element, $K_{\ell}$, corresponding to a \emph{leaf},
$\ell$ (a vertex attached to only a single parent, $p$, as in Fig.
\ref{fig:rgs}) is therefore $X^{\ell}Z^{p}$. Stabilizer generators
can be conveniently represented as linearly independent rows of an
$n\times2n$ binary \emph{check matrix }$S=\left[X\vert Z\right]$
where the left (right) sides have 1s to indicate the indices of $X$
$\left(Z\right)$ operators, and 0 otherwise. The check matrix of
a graph state, $\left|G\right\rangle ,$ is then $S_{G}=\left[I\vert\Gamma\right]$,
where $\Gamma$ is the graph's \emph{adjacency matrix}, that is 
\begin{equation}
\Gamma_{ij}=\begin{cases}
0 & \left\{ i,j\right\} \notin E_{G}\\
1 & \left\{ i,j\right\} \in E_{G}
\end{cases}.
\end{equation}

In the discussion that follows, we will refer to the Minimal Support
Condition (MSC) \citep{Nest2008}, a statement about the structure
of a stabilizer. We therefore list some definitions important for
understanding the condition, using a leaf qubit, $\ell$, for illustration:
\begin{itemize}
\item The \emph{weight} of a stabilizer element is the number of non-identity
operators comprising it. The weight of $K_{\ell}$ is 2.
\item The\emph{ distance }of a stabilizer state is the weight of the lowest-weight
element in its stabilizer. Any connected graph state with leaves is
therefore a distance-2 state.
\item The \emph{support, $\omega_{a}$, }of a stabilizer element is the
set of indices of its non-identity elements. For example, $\omega_{\ell}=\left\{ \ell,p\right\} $.
The support is called \emph{minimal }if there is no other stabilizer
element $K_{b}$ with support $\omega_{b}$ so that $\omega_{b}\subset\omega_{a}$.
A \emph{minimal element} is a stabilizer element with minimal support.
For any connected graph state with leaves, then, $\omega_{\ell}$
is automatically a minimal support, and $K_{\ell}$ a minimal element. 
\end{itemize}
\begin{defn}[Minimal Support Condition \citep{Nest2008}]
 Let $\left|G\right\rangle $ be a fully connected graph state with
stabilizer $\left\langle K_{a}\right\rangle $. Let $\mathcal{\mathcal{M}}$
denote the subgroup of $\left\langle K_{a}\right\rangle $ generated
by its minimal elements. Then $\left|G\right\rangle $ satisfies the
\emph{Minimal Support Condition (MSC) }iff the Pauli operators $X$,
$Y$ and $Z$ each occur on every qubit in $\mathcal{M}$. \label{def:MSC}
\end{defn}
States satisfying the MSC have a rich stabilizer structure with repercussions
for their $\text{LU}\Leftrightarrow\text{LC}$ status, as shown in
the following section (see also \citep{Nest2008}).

\subsection{Review of results on locally equivalent graph states \label{subsec:Review-of-results}}

Let $\left|G\right\rangle $ and $\left|\tilde{G}\right\rangle $
be two $n$-qubit graph states. If there exists a local unitary, $U=\bigotimes_{i=1}^{n}U_{i}$,
so that $U\left|G\right\rangle =\left|\tilde{G}\right\rangle $, then
the graph states are said to be \emph{LU-equivalent.} If there exists
a local Clifford unitary, $U_{C}=\bigotimes_{i=1}^{n}U_{Ci}$, so
that $U_{C}\left|G\right\rangle =\left|\tilde{G}\right\rangle $,
the graph states are \emph{LC-equivalent}. Local Clifford unitaries
are defined through their effect on the Pauli group: $A\in\mathcal{P}_{n}\implies U_{C}AU_{C}^{\dagger}\in\mathcal{P}_{n}$
\citep{Nielsen2010}.

One important repercussion of the LC-equivalence of two graph states
is that their underlying graphs can be transformed between each other
with a simple graphical prescription known as the LC Rule, expressed
in the following theorem:
\begin{thm*}[LC Rule \citep{Hein2006a,Nest2003}]
\emph{ }Two graph states, $\left|G\right\rangle $ and $\left|\tilde{G}\right\rangle $,
are LC-equivalent iff the graphs $G$ and $\tilde{G}$ are related
through a sequence of local complementations. The LC unitary that
effects a local complementation about vertex $a$ is given by
\begin{equation}
U_{a}^{\text{LC}}\left(G\right)=e^{-i\frac{\pi}{4}X^{a}}e^{i\frac{\pi}{4}Z^{N_{a}}}\propto\sqrt{K_{a}}.
\end{equation}
\end{thm*}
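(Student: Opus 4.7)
The theorem bundles two claims: the biconditional relating LC equivalence of graph states to relation by local complementation of their underlying graphs, and the explicit form of the unitary implementing a single local complementation. My plan is to address these in turn, handling the easy forward direction by direct stabilizer calculation and reducing the harder converse to a symplectic-algebraic problem.

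For the forward direction, I would first check that $U_a^{\mathrm{LC}} = e^{-i\frac{\pi}{4} X^a}\,e^{i\frac{\pi}{4} Z^{N_a}}$ is a local Clifford: each exponential decomposes into a tensor product of single-qubit $\pi/2$ rotations about Pauli axes, and each such rotation conjugates Paulis to Paulis. To show it performs a local complementation about $a$ at the level of graph states, I would compute $U_a^{\mathrm{LC}} K_b (U_a^{\mathrm{LC}})^{\dagger}$ for each vertex $b$, splitting into three cases. If $b = a$ or $b \notin N_a \cup \{a\}$, both exponentials act trivially on $K_b$ (commuting factor by factor), returning the generator unchanged. If $b \in N_a$, the first exponential rotates $X^b \to -Y^b$ and the second rotates $Z^a \to -Y^a$, yielding $Y^a Y^b Z^{N_b \setminus \{a\}}$; multiplying this by the preserved generator $K_a$, the $Y^a X^a = -i Z^a$ and $Y^b Z^b = i X^b$ factors combine to give (with no residual phase) $X^b Z^{N_b \,\triangle\, (N_a \setminus \{b\})}$, which is precisely the $b$-th stabilizer generator of the graph state on the locally complemented graph. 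Because the images of the $K_b$'s span the full stabilizer of that state, the action of $U_a^{\mathrm{LC}}$ on $|G\rangle$ matches the target graph state up to a global phase, and iteration handles arbitrary sequences of local complementations.

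To establish $U_a^{\mathrm{LC}} \propto \sqrt{K_a}$, I would use that $X^a$ and $Z^{N_a}$ have disjoint supports, so the exponentials commute and combine into $e^{i\frac{\pi}{4}(Z^{N_a} - X^a)}$. Separately, $K_a^2 = I$ allows writing $\sqrt{K_a} = \frac{1}{\sqrt 2}(I - iK_a)$ (up to a root choice), and a direct comparison of these two closed forms yields the proportionality up to an overall phase.

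The converse direction, that \emph{every} LC equivalence of graph states arises from a sequence of local complementations, is the substantive part of the theorem and is where I expect the \textbf{main obstacle} to lie. My approach would follow the symplectic framework of Van den Nest, Dehaene and De Moor. Local Clifford unitaries, modulo phases, correspond to block-diagonal symplectic matrices in $\operatorname{Sp}(2n,\mathbb{F}_2)$ whose $n$ diagonal blocks each lie in $\operatorname{Sp}(2,\mathbb{F}_2)$. LC equivalence of $|G\rangle$ and $|\tilde G\rangle$ then translates into the existence of such a block-diagonal $Q$ together with an invertible $\mathbb{F}_2$ change of stabilizer generators that carries $[I | \Gamma]$ to $[I | \tilde \Gamma]$. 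The task is then to decompose $Q$ into a product of single-qubit Clifford generators (for example, Hadamards and phase gates) and to prove, inductively, that each elementary factor, once absorbed into an appropriate re-choice of generators to maintain the identity on the left block, either leaves the adjacency matrix fixed or performs a local complementation at a single vertex. The delicate portion is tracking how off-diagonal entries of $\Gamma$ flip at each elementary step while preserving the canonical form of the check matrix; this is where the combinatorial rule of local complementation is extracted from the algebraic symplectic reduction, and it is the part that demands the most care.
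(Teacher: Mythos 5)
The paper does not actually prove this theorem: its ``proof'' is the single line ``Given in \citep{Hein2006a,Nest2003},'' so there is no internal argument to compare against. Your forward direction is essentially the standard argument from those references and is correct: conjugating each $K_{b}$ by $U_{a}^{\text{LC}}$, case-splitting on $b=a$, $b\in N_{a}$, and $b\notin N_{a}\cup\left\{ a\right\} $, and absorbing the stray $Y^{a}Y^{b}$ by multiplying with the preserved generator $K_{a}$ does land exactly on the stabilizer generators of the locally complemented graph with trivial phase, and uniqueness of the stabilizer state finishes it. One small repair: your proposed verification of $U_{a}^{\text{LC}}\propto\sqrt{K_{a}}$ by ``direct comparison'' of $e^{i\frac{\pi}{4}\left(Z^{N_{a}}-X^{a}\right)}$ with $\frac{1}{\sqrt{2}}\left(I-iK_{a}\right)$ would fail, because these are \emph{different} square roots of $K_{a}$ (the first expands into four Pauli terms, the second into two) and are not proportional. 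The correct move is to expand $U_{a}^{\text{LC}}=\frac{1}{2}\left(I+iZ^{N_{a}}-iX^{a}+X^{a}Z^{N_{a}}\right)$ and square it directly; the cross terms cancel and one gets $\left(U_{a}^{\text{LC}}\right)^{2}=K_{a}$.

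The genuine gap is the converse. You correctly identify the Van den Nest--Dehaene--De Moor symplectic framework and correctly locate where the difficulty lives, but what you have written is a program, not a proof: the claim that every block-diagonal $Q\in\operatorname{Sp}\left(2n,\mathbb{F}_{2}\right)$ carrying $\left[I\vert\Gamma\right]$ to $\left[I\vert\tilde{\Gamma}\right]$ (after a re-choice of generators) factors into elementary steps each of which is either trivial on the graph or a single local complementation \emph{is} the theorem, and the inductive bookkeeping you defer to --- maintaining the canonical form of the check matrix while tracking how $\Gamma$ transforms under each single-qubit Clifford factor --- is the entire content of the ``only if'' direction. In the actual proof of \citep{Nest2003} this is handled by deriving the constraint $\tilde{\Gamma}\left(\Gamma B+D\right)=\Gamma A+C$ on the diagonal blocks and then showing, by a nontrivial induction, that any solution is generated by the local complementation rule; none of that is reproduced or replaced in your sketch. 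Since the paper itself delegates the whole theorem to the literature, your attempt is best read as a correct reconstruction of the easy half plus an accurate map of, but not a route through, the hard half.
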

\begin{proof}
Given in \citep{Hein2006a,Nest2003}.
\end{proof}
While there exist graph states for which local Clifford equivalence
is not sufficient to describe local unitary equivalence more broadly
\citep{Ji2007,Guhne2017}, $\text{LU}\Leftrightarrow\text{LC}$ holds
for a large assortment of graph states. We can immediately spot some
of these states with a simple fact:
\begin{fact}
If $\text{LU}\Leftrightarrow\text{LC}$ holds for a graph state, $\left|G\right\rangle $,
then it holds for any graph state, $\left|\tilde{G}\right\rangle $,
LC-equivalent to $\left|G\right\rangle $. \label{fact:LCclass}
\end{fact}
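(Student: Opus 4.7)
The plan is to exploit the group structure of local unitaries and local Cliffords together with the transitivity of the induced equivalence relations. Suppose $|G\rangle$ satisfies $\text{LU}\Leftrightarrow\text{LC}$ and suppose $|\tilde{G}\rangle$ is LC-equivalent to $|G\rangle$ through a local Clifford $U_{C}$, so that $U_{C}|G\rangle=|\tilde{G}\rangle$. My goal is to show that any graph state $|\tilde{G}'\rangle$ that is LU-equivalent to $|\tilde{G}\rangle$ must in fact be LC-equivalent to $|\tilde{G}\rangle$.

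First I would pick an arbitrary local unitary $V=\bigotimes_{i}V_{i}$ with $V|\tilde{G}\rangle=|\tilde{G}'\rangle$. Composing with the hypothesized LC witness gives $(VU_{C})|G\rangle=|\tilde{G}'\rangle$, and since both $V$ and $U_{C}$ are tensor products of single-qubit operators, $VU_{C}$ is itself a local unitary. Hence $|\tilde{G}'\rangle$ is LU-equivalent to $|G\rangle$, and by the hypothesis that $\text{LU}\Leftrightarrow\text{LC}$ holds for $|G\rangle$, there exists a local Clifford $W_{C}$ with $W_{C}|G\rangle=|\tilde{G}'\rangle$.

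To finish, I would invoke that local Cliffords form a group: the inverse of a tensor product of single-qubit Cliffords is again of the same form, as is the composition of two such operators. Therefore $W_{C}U_{C}^{\dagger}$ is a local Clifford, and
\begin{equation}
(W_{C}U_{C}^{\dagger})|\tilde{G}\rangle=W_{C}|G\rangle=|\tilde{G}'\rangle,
\end{equation}
which exhibits the LC-equivalence of $|\tilde{G}\rangle$ and $|\tilde{G}'\rangle$. Since $|\tilde{G}'\rangle$ was an arbitrary state LU-equivalent to $|\tilde{G}\rangle$, this proves $\text{LU}\Leftrightarrow\text{LC}$ for $|\tilde{G}\rangle$.

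There is no substantive obstacle here: the fact is essentially the statement that the $\text{LU}\Leftrightarrow\text{LC}$ property is constant on LC-equivalence classes, which follows from transitivity together with the closure of the local unitary and local Clifford sets under composition and inversion. The only thing worth being careful about is that a \emph{global} phase picked up when composing the witnesses does not matter, since $|G\rangle$ and $|\tilde{G}\rangle$ are defined up to an overall phase and the LU/LC relations are naturally taken modulo such a phase.
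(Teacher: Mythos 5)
Your proof is correct and follows essentially the same argument as the paper's: transport an arbitrary local unitary acting on $\left|\tilde{G}\right\rangle$ back to $\left|G\right\rangle$ by composing with the LC witness, apply the hypothesis there, and compose back, using closure of local Cliffords under products and inverses. The only difference is the (immaterial) direction in which you write the witness $U_{C}$.
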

\begin{proof}
Suppose that $\left|G\right\rangle $ satisfies $\text{LU}\Leftrightarrow\text{LC}$.
Since $\left|G\right\rangle $ and $\left|\tilde{G}\right\rangle $
are LC-equivalent, we may find a local Clifford unitary, $U_{C}$,
so that $\left|G\right\rangle =U_{C}\left|\tilde{G}\right\rangle $.
Let $U$ be an arbitrary local unitary, and suppose there is a graph,
$H$, so that $U\left|\tilde{G}\right\rangle =\left|H\right\rangle $.
Then
\begin{align*}
U\left|\tilde{G}\right\rangle =\left|H\right\rangle  & \implies U\left(U_{C}^{-1}\left|G\right\rangle \right)=\left|H\right\rangle \\
 & \implies\left(UU_{C}^{-1}\right)\left|G\right\rangle =\left|H\right\rangle .
\end{align*}

As $\text{LU}\Leftrightarrow\text{LC}$ holds for $\left|G\right\rangle $
and $UU_{C}^{-1}$ is a local unitary, it is true that there exists
a local Clifford unitary, $V_{C}$, with
\[
V_{C}\left|G\right\rangle =\left|H\right\rangle \implies V_{C}U_{C}\left|\tilde{G}\right\rangle =\left|H\right\rangle .
\]

Since the product of two local Clifford unitaries is again a local
Clifford unitary, the result follows.
\end{proof}
Because any stabilizer state is LC-equivalent to a graph state, Fact
\ref{fact:LCclass} implies that answering a question related to the
local equivalence of graph states has immediate repercussions for
stabilizer states \citep{VanNest2004,Schlingemann2001,Grassl2002}. 

The following is a sample of six significant Results in the literature
that we refer to in Section \ref{sec:Results}. It is known that $\text{LU}\Leftrightarrow\text{LC}$
holds for $\left|G\right\rangle $ if
\begin{enumerate}
\item $G$ has 8 vertices or fewer \citep{Hein2006a,Cabello2009}. \label{enu:-has-8}
\item $\left|G\right\rangle $ is the $n$-Greenberger-Horne-Zeilinger (GHZ)
state: $\left|\text{GHZ}_{n}\right\rangle \equiv\frac{1}{\sqrt{2}}\left(\left|0\right\rangle ^{\otimes n}+\left|1\right\rangle ^{\otimes n}\right)$
\citep{Greenberger1989,Nest2008}. \label{enu:nghz}
\item $\left|G\right\rangle $ satisfies the Minimal Support Condition given
in Def. (\ref{def:MSC}) \citep{Nest2008}. \label{enu:msc}
\item The state obtained by removing the leaves from $G$ satisfies the
MSC \citep{Zeng2007a}. \label{enu:mscleaf}
\item $G$ has no cycles of length 3 or 4. (Here a \emph{cycle }denotes
any path that starts and ends at the same vertex, with no edges repeated,
and the \emph{length} is the number of edges along this path.) \citep{Zeng2007a}.
\label{enu:cycle}
\item The stabilizer of $\left|G\right\rangle $ has rank of support less
than 6 \citep{Ji2007}. \label{enu:rankstab}
\end{enumerate}
Armed with the tools from this section, we are ready to investigate
LU-LC equivalence in all-photonic quantum repeater graph states.

\section{Results \label{sec:Results}}

\subsection{Complete graph, biclique and repeater graph states \label{subsec:CBR}}

We begin this section by showing that $n$-GHZ states satisfy $\text{LU}\Leftrightarrow\text{LC}$.
Although this result was shown in \citep{Nest2008}, it is instructive
to present it as a simple corollary to the results in Section \ref{sec:Preliminaries}
to contrast the direct approach taken by Van den Nest et al.:
\begin{cor}
\label{cor:n-GHZ}n-GHZ states satisfy $\text{LU}\Leftrightarrow\text{LC}$.
\end{cor}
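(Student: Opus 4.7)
The plan is to exhibit a particularly simple graph that is LC-equivalent to the one underlying $|\text{GHZ}_n\rangle$ and whose $\text{LU}\Leftrightarrow\text{LC}$ status follows from one of the Results in Section~\ref{subsec:Review-of-results}; then Fact~\ref{fact:LCclass} lifts the conclusion to the GHZ state itself. The natural candidate is the star graph $K_{1,n-1}$, with one central vertex joined to $n-1$ leaves.

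First I would verify that $|\text{GHZ}_n\rangle$ is LC-equivalent to the star graph state $|K_{1,n-1}\rangle$. A short calculation shows that applying a Hadamard to every leaf qubit of $|K_{1,n-1}\rangle$ yields $|\text{GHZ}_n\rangle$, with the central vertex playing the role of the distinguished GHZ qubit. Equivalently, one may run a single local complementation at the center of the star, which by the LC Rule is implemented by the local Clifford unitary $U^{\text{LC}}$ and converts the star into the complete graph $K_n$; the graph state of $K_n$ is then manifestly the $n$-GHZ state up to Hadamards on all but one qubit. Either route establishes the required LC-equivalence.

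Once the LC-equivalence is in hand, the rest is immediate: since $K_{1,n-1}$ is a tree, it contains no cycles at all, and so in particular no cycles of length~3 or~4. Thus Result~\ref{enu:cycle} of Section~\ref{subsec:Review-of-results} delivers $\text{LU}\Leftrightarrow\text{LC}$ for $|K_{1,n-1}\rangle$, and Fact~\ref{fact:LCclass} transports the property across the LC-equivalence class to $|\text{GHZ}_n\rangle$.

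The main (and essentially only) obstacle is the LC-equivalence check in the second paragraph. It is a routine single-qubit Clifford calculation, but it needs to be written down carefully so that the subsequent appeal to Fact~\ref{fact:LCclass} is unambiguous; the boundary cases $n=1$ and $n=2$ should also be mentioned, since there the star graph degenerates (to a single vertex and a single edge, respectively) but the conclusion is trivial. Everything else is a direct appeal to the preliminaries already recorded in Section~\ref{sec:Preliminaries}.
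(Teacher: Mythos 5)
Your proposal is correct and follows essentially the same route as the paper: establish LC-equivalence of $\left|\text{GHZ}_{n}\right\rangle $ to the star graph state via Hadamards on the leaves, note that the star has no cycles of length 3 or 4 so Result \ref{enu:cycle} applies, and transfer the conclusion with Fact \ref{fact:LCclass}. The only quibble is your parenthetical alternative: the complete graph state is not obtained from $\left|\text{GHZ}_{n}\right\rangle $ by Hadamards alone (the local complementation unitary involves phase gates), but this side remark is unnecessary since your primary route already suffices.
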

\begin{proof}
$\left|\text{GHZ}_{n}\right\rangle $ is LC-equivalent to an $n$-qubit
star graph state $\left|S_{n}\right\rangle $ (Fig. \ref{fig:star}).
To see this, apply prescription (\ref{eq:gdef}) to generate $\left|S_{n}\right\rangle $
:
\begin{align*}
\left|S_{n}\right\rangle  & =\text{CZ}^{12}\text{CZ}^{13}\ldots\text{CZ}^{1n}\left|+\right\rangle ^{\otimes n}\\
 & =\left|0\right\rangle \left|+\right\rangle ^{\otimes n-1}+\left|1\right\rangle \left|-\right\rangle ^{\otimes n-1},
\end{align*}

where $\left|-\right\rangle \equiv\frac{1}{\sqrt{2}}\left(\left|0\right\rangle -\left|1\right\rangle \right)$.
Then, applying the gate
\[
\mathcal{H}\equiv I^{1}\otimes H^{2}\otimes\cdots\otimes H^{n}
\]
to $\left|S_{n}\right\rangle $, where the $H^{i}=\frac{1}{\sqrt{2}}\begin{bmatrix}1 & -1\\
1 & 1
\end{bmatrix}$ are Hadamard gates, transforms the state to $\left|\text{GHZ}_{n}\right\rangle $.
Since Hadamard gates are local Clifford unitaries, so is $\mathcal{H}$.
But star graphs have no cycles of length three or four, implying,
by Result \ref{enu:cycle}, that they satisfy $\text{LU}\Leftrightarrow\text{LC}$.
Finally, Fact \ref{fact:LCclass} implies that $\left|\text{GHZ}_{n}\right\rangle $
also satisfies $\text{LU}\Leftrightarrow\text{LC}$.
\end{proof}
The next result follows immediately:
\begin{cor}
Complete graph states satisfy $\text{LU}\Leftrightarrow\text{LC}$.
\end{cor}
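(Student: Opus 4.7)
The plan is to show that the complete graph state $|K_n\rangle$ lies in the same LC-equivalence class as the star graph state $|S_n\rangle$, and then invoke the previous corollary together with Fact \ref{fact:LCclass}.

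First I would recall that local complementation at a vertex $a$ toggles all edges among the neighbours of $a$. Starting from the star $S_n$ (with centre $c$ and $n-1$ leaves), the neighbourhood $N_c$ consists of all the leaves, which carry no edges among themselves. Applying a local complementation at $c$ therefore adds every missing edge among the leaves while preserving the edges from $c$ to each leaf, producing precisely the complete graph $K_n$. Hence, by the LC Rule, $|K_n\rangle$ is LC-equivalent to $|S_n\rangle$.

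Next, Corollary \ref{cor:n-GHZ} (and its proof) already established that $|S_n\rangle$ is LC-equivalent to $|\text{GHZ}_n\rangle$ and that both satisfy $\text{LU}\Leftrightarrow\text{LC}$. Combining this with the LC-equivalence of $|K_n\rangle$ and $|S_n\rangle$ above, and applying Fact \ref{fact:LCclass}, yields that $|K_n\rangle$ also satisfies $\text{LU}\Leftrightarrow\text{LC}$.

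There is essentially no obstacle here: the entire argument is a short chain of LC-equivalences anchored by the previous corollary, so the main (minor) step to verify carefully is the graph-theoretic observation that local complementation at the centre of a star produces the complete graph on the same vertex set.
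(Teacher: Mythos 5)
Your proposal is correct and follows essentially the same route as the paper: local complementation at the centre of the star produces the complete graph, so the two states are LC-equivalent by the LC Rule, and Fact \ref{fact:LCclass} transfers the $\text{LU}\Leftrightarrow\text{LC}$ property established for the star in the preceding corollary. The only difference is that you spell out the graph-theoretic step explicitly, which the paper leaves to a figure.
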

\begin{proof}
Applying a local complementation to the central qubit of the star
graph, $S_{n}$, produces a complete graph, $C_{n}$ (Fig. \ref{fig:c10}).
By the LC Rule, this implies that $\left|S_{n}\right\rangle $ and
$\left|C_{n}\right\rangle $ are LC-equivalent, and the result follows
by Fact \ref{fact:LCclass}.
\end{proof}
Somewhat less obviously, it is also true that $\text{LU}\Leftrightarrow\text{LC}$
holds for complete\emph{ bipartite }graph state, or bicliques, as
defined and illustrated in Fig. \ref{fig:biclique}:
\begin{claim}
\emph{Biclique states satisfy $\text{LU}\Leftrightarrow\text{LC}$.
\label{claim1}}
\end{claim}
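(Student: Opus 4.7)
The plan is to exhibit a sequence of local complementations that transforms the biclique $K_{m,n}$ into a tree, at which point Result \ref{enu:cycle} applied to the tree and Fact \ref{fact:LCclass} close the argument. When $m=1$ or $n=1$ the biclique is already a star, so assume $m,n\geq 2$ and write the two parts as $A=\{a_1,\ldots,a_m\}$ and $B=\{b_1,\ldots,b_n\}$.

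The heart of the argument is the effect of three successive local complementations, at the vertices $a_1$, $b_1$, and $a_1$ again. The first, at $a_1$, has neighbourhood $B$ and therefore fills in $K_n$ on $B$, while $A$ remains edgeless and the bipartite edges stay intact. The second, at $b_1$, has neighbourhood $A\cup(B\setminus\{b_1\})$; toggling every edge inside this neighbourhood simultaneously fills in $K_m$ on $A$, erases the $K_{n-1}$ on $B\setminus\{b_1\}$, and deletes every edge between $A$ and $B\setminus\{b_1\}$. The intermediate graph is therefore $K_{m+1}$ on $A\cup\{b_1\}$ with pendant edges from $b_1$ to each of $b_2,\ldots,b_n$. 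The third, at $a_1$, has neighbourhood $(A\setminus\{a_1\})\cup\{b_1\}$ and erases every edge inside this set; the pendant vertices $b_2,\ldots,b_n$ are untouched because they are not in $N(a_1)$. The resulting graph is the double-star with spine $a_1b_1$, leaves $a_2,\ldots,a_m$ at $a_1$, and leaves $b_2,\ldots,b_n$ at $b_1$.

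A tree has no cycles, in particular none of length $3$ or $4$, so Result \ref{enu:cycle} gives $\text{LU}\Leftrightarrow\text{LC}$ for the double-star graph state, and Fact \ref{fact:LCclass} transfers the property back along the LC sequence to $\left|K_{m,n}\right\rangle$. The step that most deserves care is the middle local complementation, where three distinct edge classes are toggled simultaneously; I would tabulate them explicitly to confirm that the intermediate graph is exactly $K_{m+1}$ with $n-1$ leaves on $b_1$, since that clean form is precisely what lets the third local complementation collapse the complete part into a star without disturbing the leaves.
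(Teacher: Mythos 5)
Your proposal is correct and follows essentially the same route as the paper: the paper's proof (via Fig.~\ref{fig:bicproc}) performs exactly the same three local complementations, at $a_1$, then $b_1$, then $a_1$ again, to reach the ``binary star,'' and then invokes Result \ref{enu:cycle} and Fact \ref{fact:LCclass}. Your explicit edge-by-edge verification of the middle step, and your treatment of the asymmetric case $K_{m,n}$ directly (which the paper defers to a remark and Fig.~\ref{fig:genbi}), are welcome but do not change the argument.
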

\begin{proof}
We may not use Result \ref{enu:cycle} right away: although no bipartite
graph can have a cycle of length 3, bicliques have numerous length-4
cycles. But Fig. \ref{fig:bicproc} shows how three applications of
the LC Rule transform the biclique to a ``binary star'' state (Fig.
\ref{fig:bistar}). Since the binary star has no cycles of length
3 or 4, Fact \ref{fact:LCclass} and Result \ref{enu:cycle} now give
the result. 
\end{proof}
Actually, the procedure in the proof to Claim \ref{claim1} is easily
generalized to account for asymmetrical bicliques with a different
number of vertices in the left and right sets. See Fig. \ref{fig:genbi}.
In what follows, we will use ``biclique'' to mean ``generalized
biclique,'' as this is the convention in the literature.

In proving the previous claim, we presented an \emph{LC orbit: }the
set of states LC-equivalent to a biclique state. Actually, considering
(i) the symmetry of the problem; (ii) the fact a local complementation
at a leaf vertex has no effect on the graph; and (iii) the fact that
the LC operation is a self-inverse, we have provided the complete
LC orbit up to the permutation of vertices. In general, although it
is easy to identify whether two given graph states are LC-equivalent
\citep{Bouchet1991,VanNest2004}, it is not known whether the LC orbit
of an arbitrary graph state can be generated efficiently \citep{Hein2006a}.

To move from the bare complete graph and biclique states to the repeater
graph states, it is tempting to use Results \ref{enu:msc} or \ref{enu:mscleaf}.
However, none of the graphs we have discussed satisfy the MSC:
\begin{claim}
\emph{Complete graph, biclique, and repeater graph states do not satisfy
the Minimal Support Condition.}
\end{claim}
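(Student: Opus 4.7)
The plan is to treat each of the three families in turn: identify the minimal stabilizer elements, and then exhibit a qubit on which some Pauli is absent from the subgroup $\mathcal{M}$ they generate.

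For the complete graph state $\left|C_{n}\right\rangle $ with $n\geq3$, the first observation is that for any two vertices $a\neq b$ one has $K_{a}K_{b}\propto Y^{a}Y^{b}$, since the $Z$'s on the shared neighbourhood $V\setminus\{a,b\}$ cancel and the actions on $a$ and $b$ reduce to $XZ$ and $ZX$. These weight-$2$ elements are automatically of minimal support. A short parity case analysis on $\prod_{v\in A}K_{v}$ confirms nothing else is minimal: for $\left|A\right|$ even the element equals $\pm Y^{A}$, supported exactly on $A$; for $\left|A\right|$ odd the support is all of $V$. In either case, once the support has size at least three it strictly contains some pair $\{v,w\}\subseteq A$, which is itself the support of the minimal element $Y^{v}Y^{w}$. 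Hence $\mathcal{M}$ is generated by $Y$-pairs, only $Y$ and $I$ ever appear on any qubit, and the MSC fails.

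For the biclique state the analogous computation gives $K_{v}K_{v'}\propto X^{v}X^{v'}$ whenever $v,v'$ lie in the same partition class, the shared $Z$-neighbourhood cancelling. A parity case analysis on $(\left|A_{L}\right|,\left|A_{R}\right|)$ shows these same-part $X$-pairs are the only minimal elements: any product with at least one of the two parities odd has support containing an entire side $L$ or $R$, and hence strictly contains some same-part pair; any larger even-even product contains such a pair in its support by construction. Thus $\mathcal{M}$ consists only of $X$ and $I$ operators, so neither $Y$ nor $Z$ appears on any qubit.

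For the repeater graph states, the leaf generators $K_{\ell_{v}}=X^{\ell_{v}}Z^{v}$ have weight $2$ and are therefore minimal. I would write a general element as $\prod_{v\in A}K_{v}\prod_{w\in B}K_{\ell_{w}}$, with $A$ a subset of interior vertices and $B$ indexing which leaf generators are used. The same parity case analysis — on $\left|A\right|$ for the complete variant and on $(\left|A_{L}\right|,\left|A_{R}\right|)$ for the bipartite variant — shows that whenever $A$ is nonempty, every $v\in A$ contributes a non-identity Pauli on both $v$ and $\ell_{v}$, so $\{v,\ell_{v}\}$ lies inside the support; and when $A=\emptyset$, the element is $\prod_{w\in B}K_{\ell_{w}}$, whose support $\bigcup_{w\in B}\{w,\ell_{w}\}$ is minimal only when $\left|B\right|=1$. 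Hence the minimal elements are precisely the $K_{\ell_{v}}$, the subgroup $\mathcal{M}$ they generate acts as $X$ or $I$ on every leaf and as $Z$ or $I$ on every interior vertex, and $Y$ is absent on every qubit, violating the MSC.

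The main obstacle is the final step: ruling out hidden large-support minimal elements in the repeater case. The parity case analysis is what makes this tractable, with the key observation being that any nontrivial use of an interior generator $K_{v}$ simultaneously places a non-identity Pauli on both $v$ and $\ell_{v}$, immediately putting a $K_{\ell_{v}}$-support strictly inside the element's support.
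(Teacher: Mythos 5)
Your proof is correct in substance but takes a genuinely different route from the paper's. The paper argues indirectly: it invokes the result of Zeng et al.\ that distance-2 stabilizer states fail the MSC (which immediately disposes of the leaved graphs --- repeater, star, and binary star states), and then uses the local-unitary invariance of the support enumerator $A_{\omega}$ to conclude that distance is an LU invariant, so the complete graph and biclique states, being LC-equivalent to star and binary star states, are also distance-2 and likewise fail the MSC. You instead compute the minimal elements of each stabilizer from scratch: the $Y$-pairs $K_{a}K_{b}\propto Y^{a}Y^{b}$ for the complete graph, the same-side $X$-pairs for the biclique, and the leaf generators $X^{\ell}Z^{p}$ for the repeater graphs, each time using a parity analysis to rule out other minimal supports. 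Your approach is self-contained, avoids the external citation entirely, and yields strictly more information --- it exhibits the subgroup $\mathcal{M}$ explicitly and names the missing Pauli on each qubit --- at the cost of more bookkeeping. Two small caveats: your characterizations require the graphs to be non-degenerate (e.g.\ $n\geq3$ for the complete graph and both sides of size at least $2$ for the biclique; the two-qubit ``complete graph'' is a Bell pair whose stabilizer does contain $X$, $Y$ and $Z$ on each qubit), and in the repeater case ``lies inside the support'' should be ``lies strictly inside,'' which holds because any $v\in A$ also forces a non-identity Pauli on at least one further qubit (an interior neighbour $u$ of $v$, or its leaf $\ell_{u}$ if the $Z$'s on $u$ cancel). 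Neither issue is fatal.
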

\begin{proof}
It was shown in \citep{Zeng2007a} that distance-2 stabilizer states,
to which graph states with leaves belong, do not satisfy the MSC.
Therefore repeater graph states, along with star (binary star) graph
states do not satisfy the MSC. But, from \citep{VanDenNest2005,Nest2008},
we know that the function $A_{\omega}\left(\left|\psi\right\rangle \right)$
that gives the number of elements in the stabilizer of $\left|\psi\right\rangle $
with support $\omega$ is invariant under local unitaries. Therefore
the distance of a graph state is LU-invariant, and so all states LC-equivalent
to the star (binary star) graph state, namely the complete graph (biclique)
state, are distance-two. Thus the complete graph (biclique) states
also do not satisfy the MSC. 

Alternatively, we know that complete graph states are locally equivalent
to $n$-GHZ states, which do not satisfy the MSC, as shown directly
in \citep{Nest2008}. 
\end{proof}
The next approach is to append a leaf to each vertex in the complete
graph and biclique state, and then run the local complementation procedures
in Fig. \ref{fig:GHZs} and \ref{fig:bicproc} backwards. This gives
us the following result:
\begin{claim}
\emph{Complete-graph-based repeater graph states with one leaf missing
and biclique-based repeater graph states with two leaves missing satisfy
$\text{LU}\Leftrightarrow\text{LC}$.}
\end{claim}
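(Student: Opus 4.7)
The plan is to re-use the local complementation sequences from Corollary \ref{cor:n-GHZ} (complete graph to star) and Claim \ref{claim1} (biclique to binary star) on the leafed repeater graph, and check that the output is a tree. Once this is established, Result \ref{enu:cycle} gives $\text{LU}\Leftrightarrow\text{LC}$ for the tree representative, and Fact \ref{fact:LCclass} transports the property back to the original repeater graph state. The one subtlety is that a local complementation at a pivot vertex $v$ toggles the adjacencies among \emph{all} of $v$'s neighbours, any pendant leaf at $v$ included; such a leaf would immediately become joined to the rest of $v$'s neighbourhood and no longer be a leaf, spoiling the intended reduction. Requiring the leaves on the pivot vertices to be absent from the outset avoids this, and is exactly what the hypothesis of the claim supplies.

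For the complete-graph-based repeater, label the core vertices $1,\dots,n$ and the surviving leaves $\ell_2,\dots,\ell_n$, with $\ell_1$ the missing one. A single local complementation at vertex $1$ toggles the edges inside its neighbourhood $\{2,\dots,n\}$, erasing the $(n-1)$-clique there while vertex $1$ stays adjacent to every other core vertex and each $i\ge 2$ retains its leaf $\ell_i$. The outcome is a spider with $n-1$ two-edge legs centred at vertex $1$, a tree and hence trivially free of $3$- and $4$-cycles.

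For the biclique-based repeater, label the parts $A=\{a_1,\dots,a_m\}$ and $B=\{b_1,\dots,b_n\}$, delete the leaves on $a_1$ and $b_1$, then apply the three pivots $a_1,b_1,a_1$ used in the proof of Claim \ref{claim1}. The substantive step is to verify stage by stage that the surviving leaves $\ell_i$ ($i\ge 2$) and $\ell_j'$ ($j\ge 2$) always sit outside the active pivot's neighbourhood, so that they neither acquire new edges nor lose their single existing one. Granting this, the core biclique collapses to the binary star with centres $a_1,b_1$ exactly as in Claim \ref{claim1}, and together with the surviving leaves the overall graph becomes a caterpillar tree, again cycle-free. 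The real work in the proof is this neighbourhood bookkeeping across the three LC steps of the biclique reduction; carrying it out is what forces the count of missing leaves to be one for the complete-graph repeater and two for the biclique repeater, no fewer.
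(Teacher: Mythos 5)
Your proposal is correct and follows essentially the same route as the paper: the paper's proof simply says to run the local complementation sequences of Fig.~\ref{fig:GHZs} and Fig.~\ref{fig:bicproc} on the imperfect repeater graph states of Fig.~\ref{fig:Imperfect-repeater-graph}, landing on a tree so that Result~\ref{enu:cycle} and Fact~\ref{fact:LCclass} apply. You additionally spell out the bookkeeping showing the surviving leaves never lie in a pivot's neighbourhood, which is exactly the point the paper makes only as a remark after the proof (that the procedure fails on the perfect repeater graph state), so your write-up is a more explicit version of the same argument.
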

\begin{proof}
Follow the LC procedure in Fig. \ref{fig:GHZs} and \ref{fig:bicproc}
but with the states in Fig. \ref{fig:Imperfect-repeater-graph}.
\end{proof}
It should be noted that running the LC procedure on the perfect repeater
graph state will not work: the leaf qubit will end up adjacent to
all the neighbours of its parent, introducing cycles of lengths that
preclude the use of Result \ref{enu:cycle}. A different strategy
is thus needed to verify if $\text{LU}\Leftrightarrow\text{LC}$ for
ideal repeater graph states. For now, this question remains open.

\begin{figure}
\begin{centering}
\subfloat[A star graph state, $\left|S^{n}\right\rangle $, where $n=10$. A
central qubit is connected to $n-1$ leaf qubits. \label{fig:star}]{\begin{centering}
\includegraphics[scale=0.35]{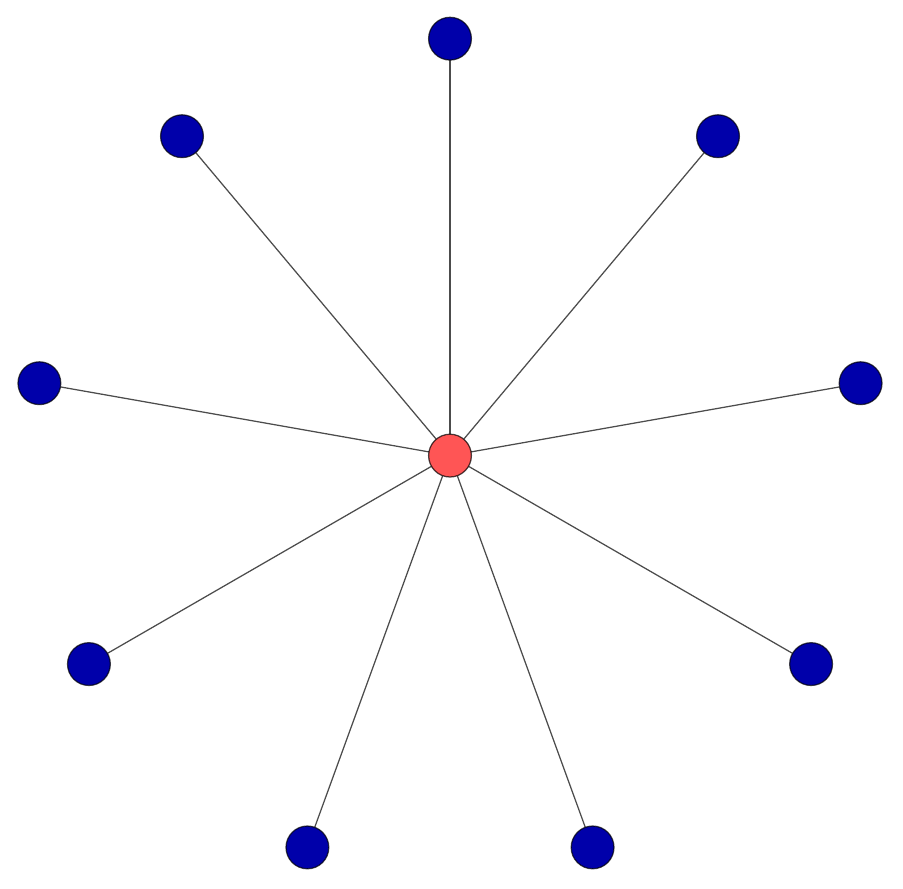}
\par\end{centering}
}$\overset{\text{LC}}{\longleftrightarrow}$\subfloat[A complete graph state, $\left|C^{n}\right\rangle $, where $n=10$.
Every qubit is connected to every other qubit. \label{fig:c10}]{\begin{centering}
\includegraphics[scale=0.35]{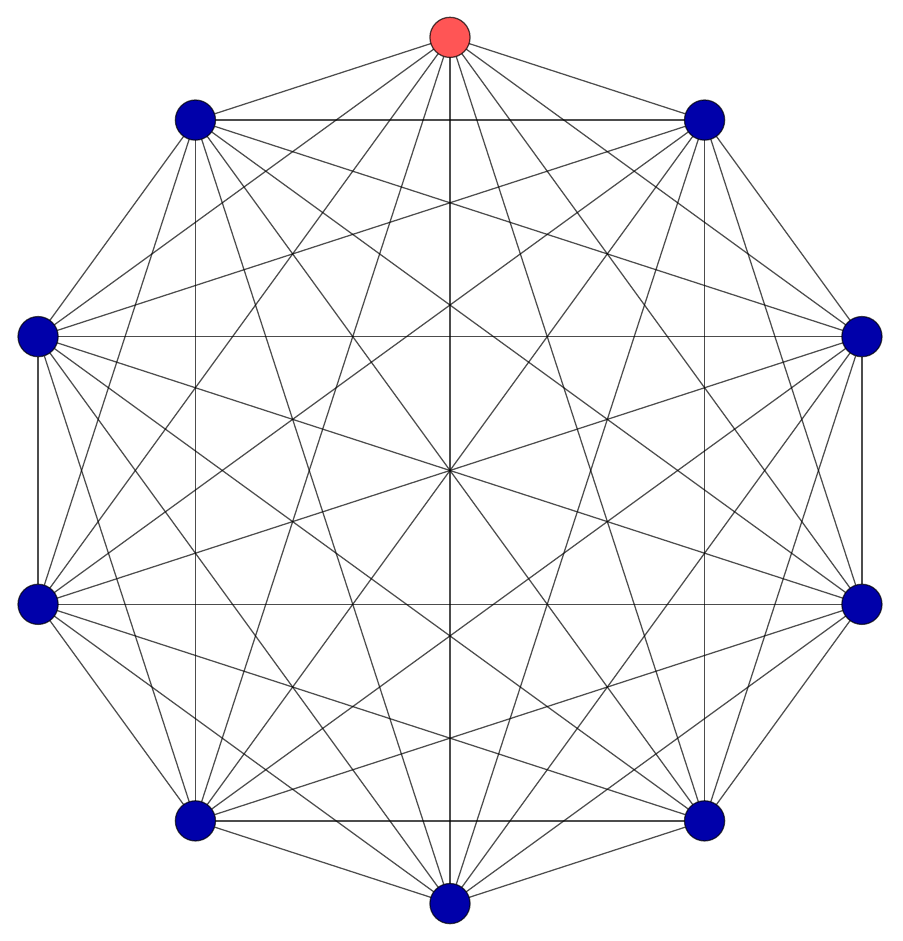}
\par\end{centering}
}
\par\end{centering}
\caption{Stars and complete graphs are locally equivalent. Local complementation
at the red (lighter) vertex on each graph produces the other graph.
\label{fig:GHZs}}
\end{figure}
\begin{figure}
\begin{centering}
\subfloat[A ``binary star'' state, $\left|\Sigma^{2m}\right\rangle $, where
$m=5$. There are two central qubits, each connected to $m-1$ leaves.
\label{fig:bistar}]{\begin{centering}
\includegraphics[scale=0.35]{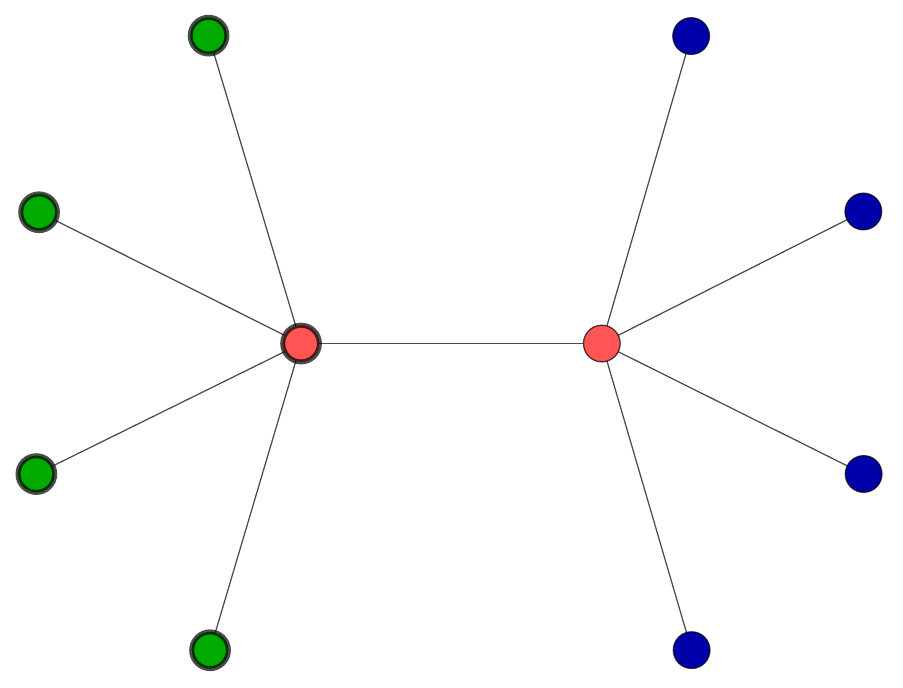}
\par\end{centering}
}$\overset{\text{LC}}{\longleftrightarrow}$\subfloat[A biclique (or complete bipartite graph) state, $\left|B^{2m}\right\rangle $,
where $m=5$. \emph{Complete} means each of the $m$ qubits on the
left is connected to each of the $m$ qubits on the right; \emph{bipartite}
means the qubits within each set are disconnected. \label{fig:biclique}]{\begin{centering}
\includegraphics[scale=0.3]{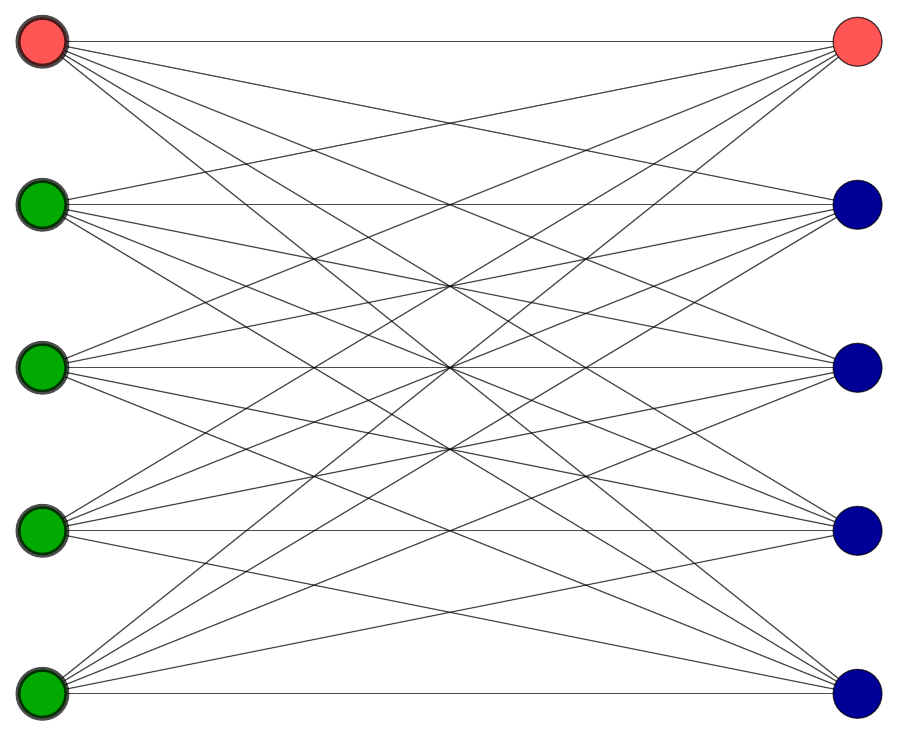}
\par\end{centering}
}
\par\end{centering}
\centering{}\caption{Binary stars and bicliques are locally equivalent. Local complementation
at the central vertices. More detailed procedure in Fig. \ref{fig:bicproc}.}
\end{figure}
\begin{figure}
\begin{centering}
\subfloat[]{\includegraphics[scale=0.35]{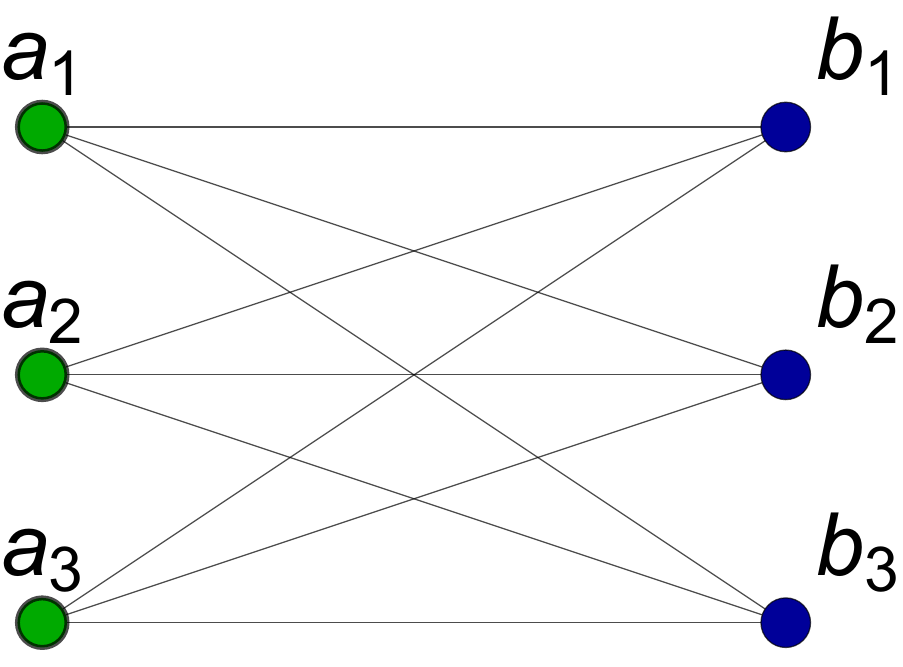}

}$\qquad$\subfloat[]{\includegraphics[scale=0.35]{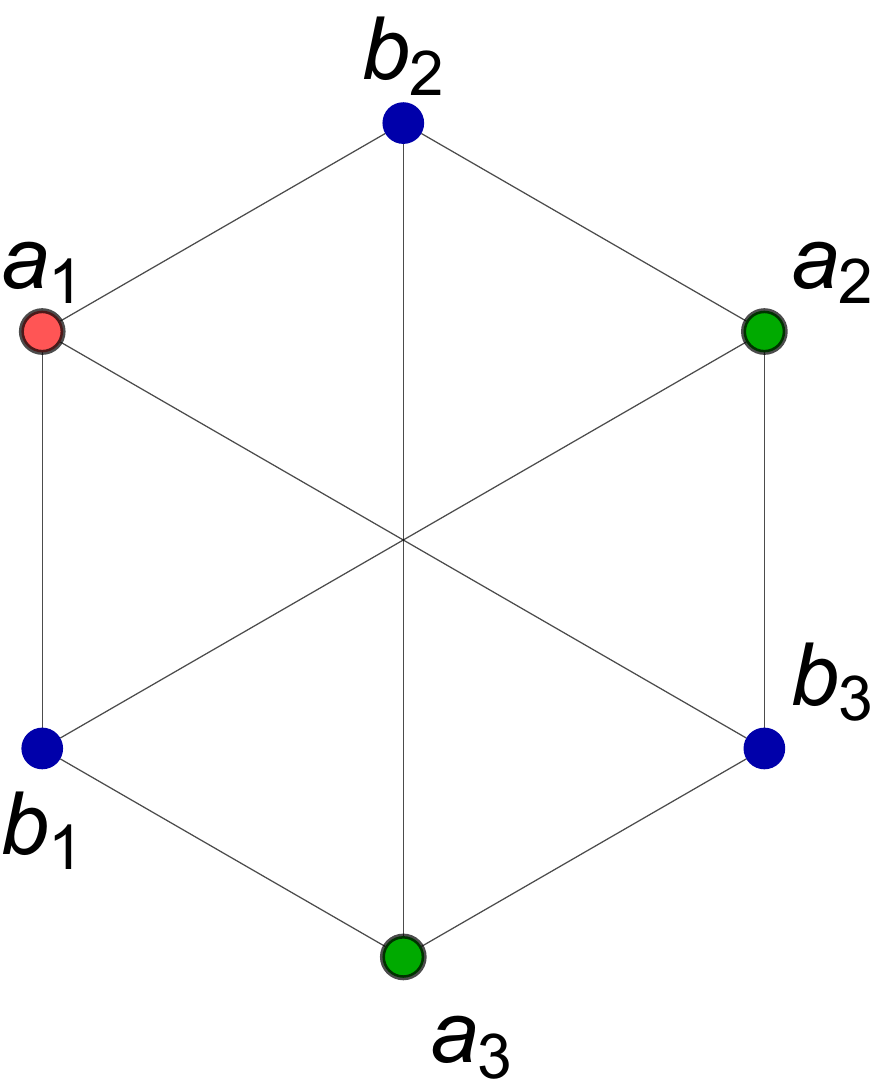}}
\par\end{centering}
\begin{centering}
\subfloat[]{\includegraphics[scale=0.35]{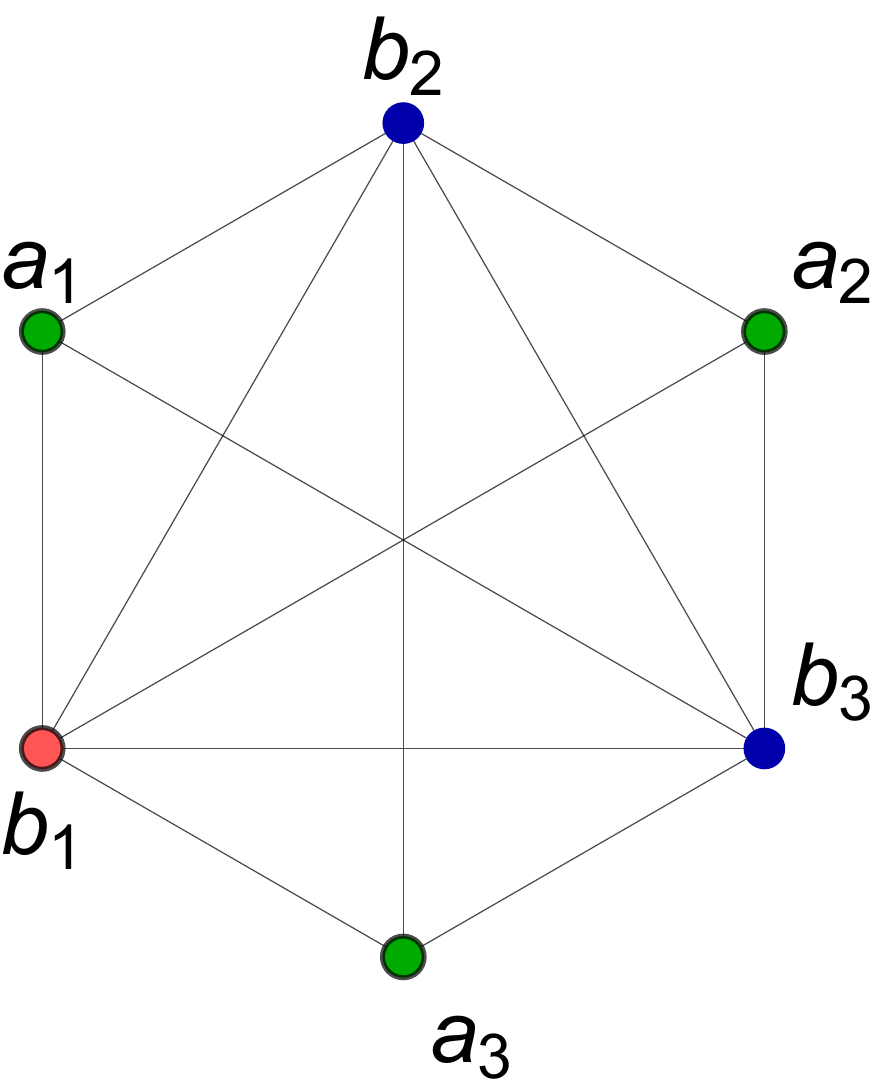}}$\qquad$\subfloat[]{\includegraphics[scale=0.35]{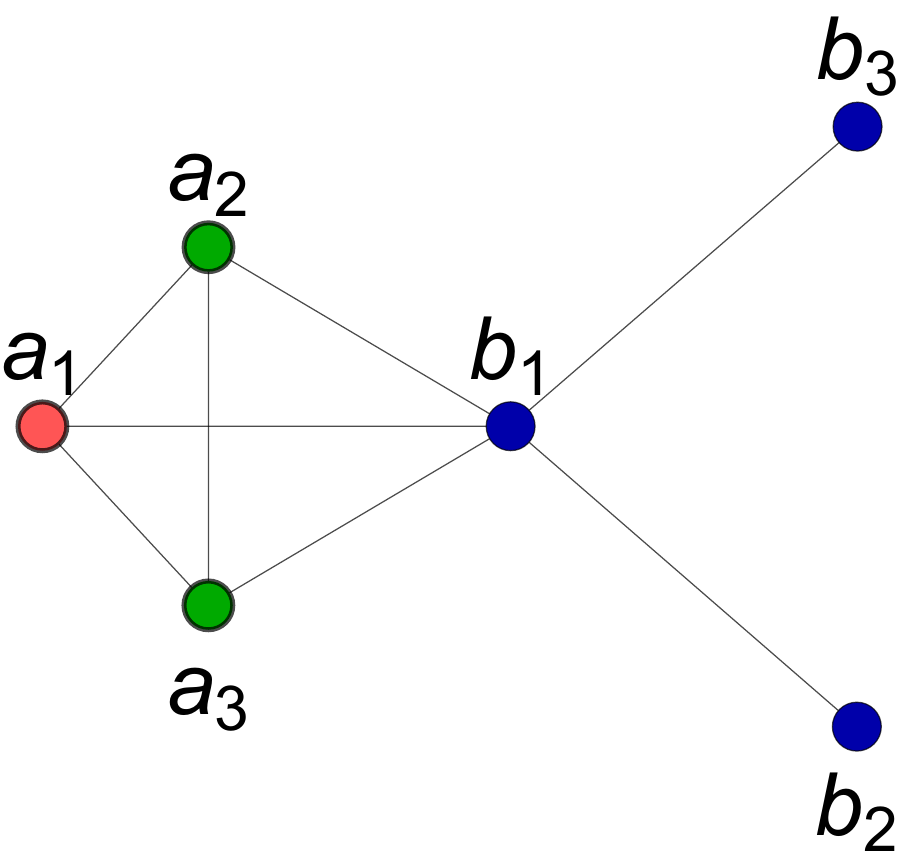}}
\par\end{centering}
\begin{centering}
\subfloat[]{\includegraphics[scale=0.35]{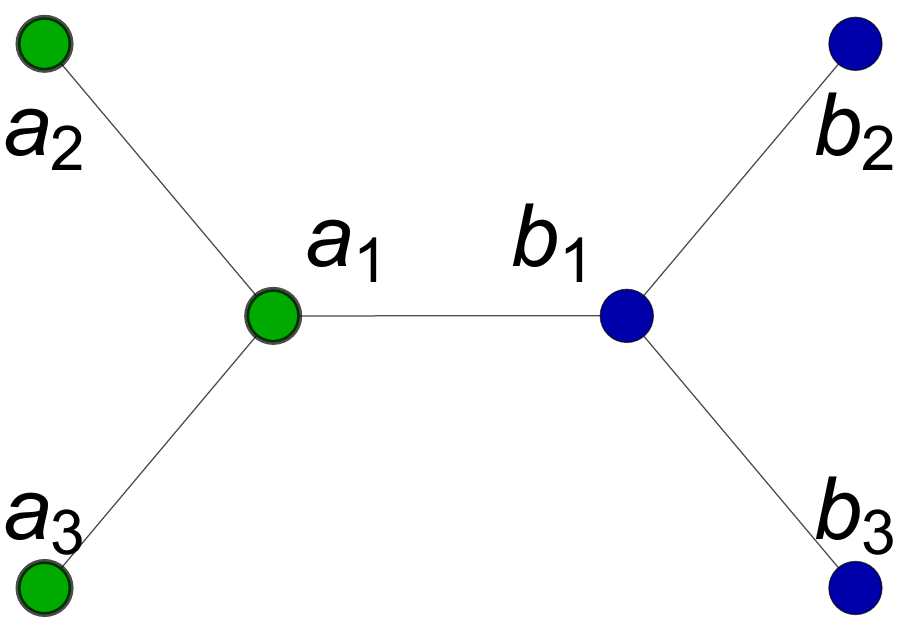}}
\par\end{centering}
\caption{Procedure for transforming biclique state, $\left|B^{2m}\right\rangle $,
into a ``binary star'' state, $\left|\Sigma^{2m}\right\rangle $,
through successive applications of the LC Rule. Here $m=3$ but the
procedure is general. The nodes on the left half are labelled $a_{i}$
while the nodes on the right half are labelled $b_{i}$. First, unravel
the state in (a) to get (b). Then, locally complement the graph in
(b) at $a_{1}$ to obtain (c), in which the $b_{i}$ form a complete
graph. Next, locally complement the graph in (c) at $b_{1}$ to get
a complete graph among $b_{1}$ an all the $a_{i}$. Here $b_{2},\ldots,b_{m}$
are connected to $b_{1}$ but not to each other. Finally, locally
complement the graph in (d) at $a_{1}$ to obtain (e). \label{fig:bicproc}}
\end{figure}
\begin{figure}
\begin{centering}
\subfloat[A generalized binary star state, $\left|\bar{\Sigma}{}^{m,n}\right\rangle $,
where $m=3$ and $n=5$. This is the state in \ref{fig:bistar} but
each of the two central qubits is connected to a different number
of leaves.]{\begin{centering}
\includegraphics[scale=0.35]{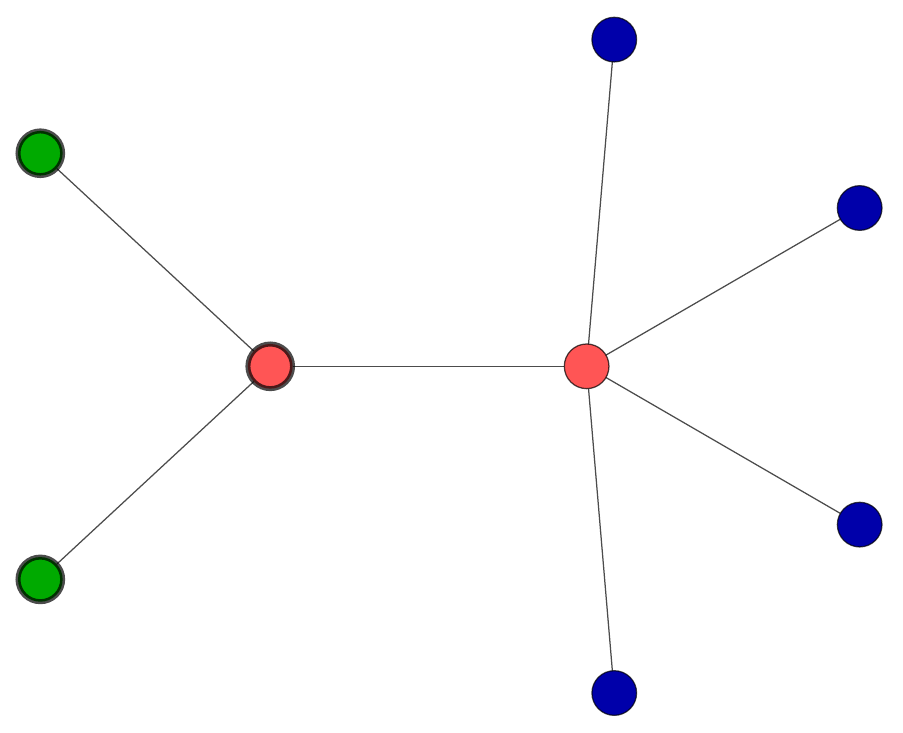}
\par\end{centering}
}$\overset{\text{LC}}{\longleftrightarrow}$\subfloat[A generalized biclique state, $\left|\bar{B}^{m,n}\right\rangle $,
where $m=3$ and $n=5.$ This is the state in \ref{fig:biclique}
but each of the halves has a different number of qubits.]{\begin{centering}
\includegraphics[scale=0.35]{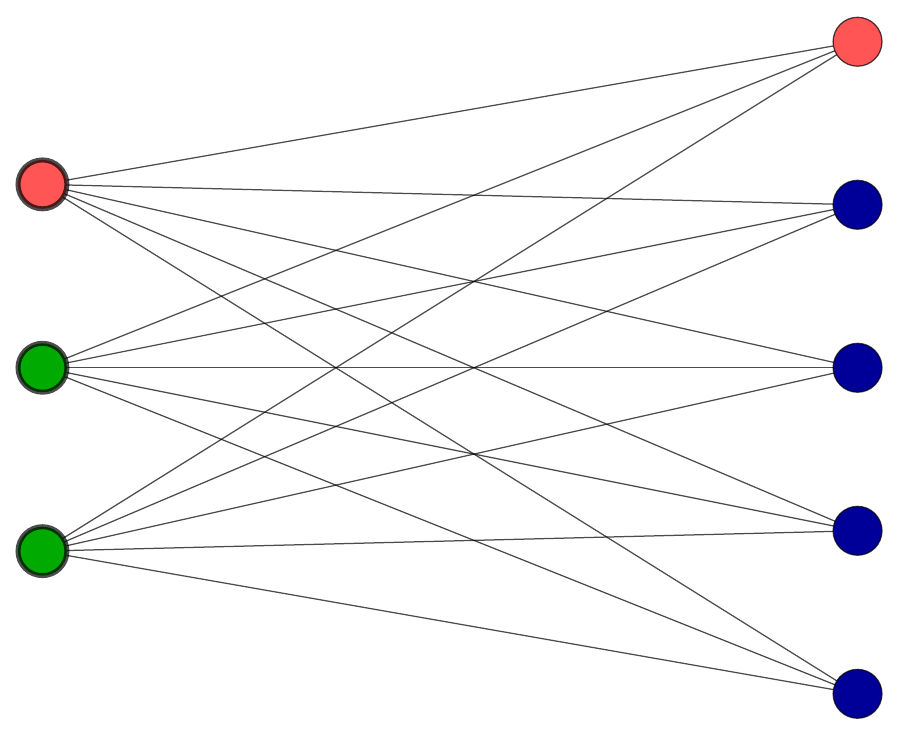}
\par\end{centering}
}
\par\end{centering}
\centering{}\caption{Generalized binary stars and bicliques are locally equivalent. The
steps in Fig. \ref{fig:bicproc} hold but the $\left(m+n\right)$-gons
in steps (b) and (c) will have some missing sides\label{fig:genbi}.}
\end{figure}
\begin{figure}
\begin{centering}
\subfloat[]{\includegraphics[scale=0.35]{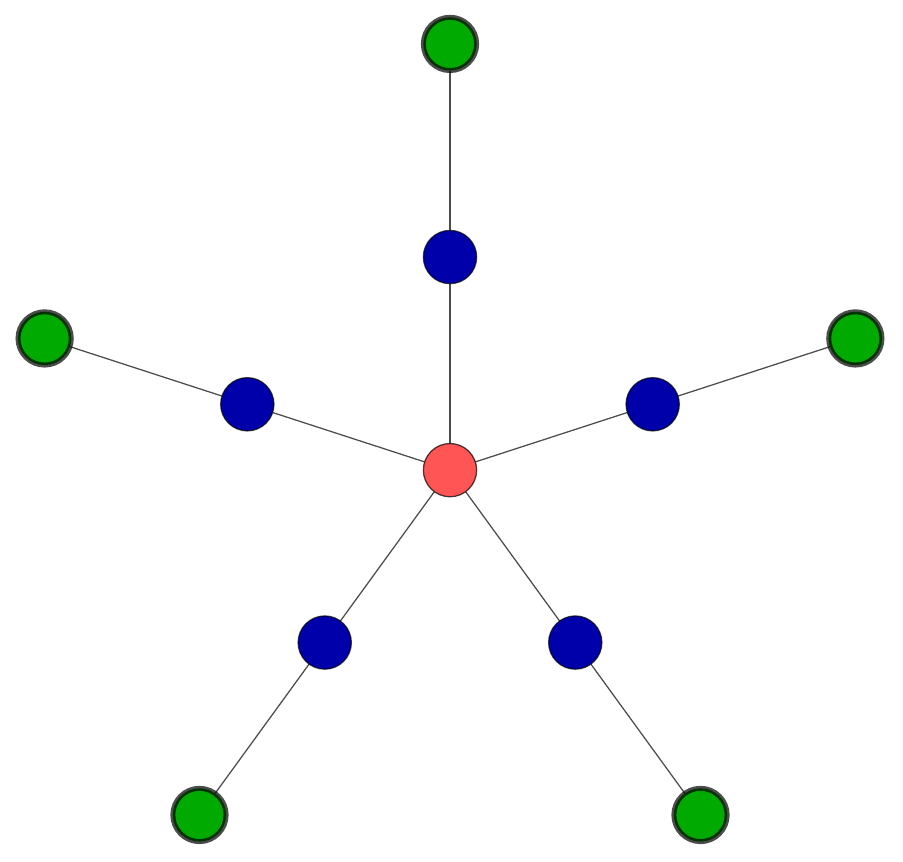}}$\overset{\text{LC}}{\longleftrightarrow}$\subfloat[]{\includegraphics[scale=0.35]{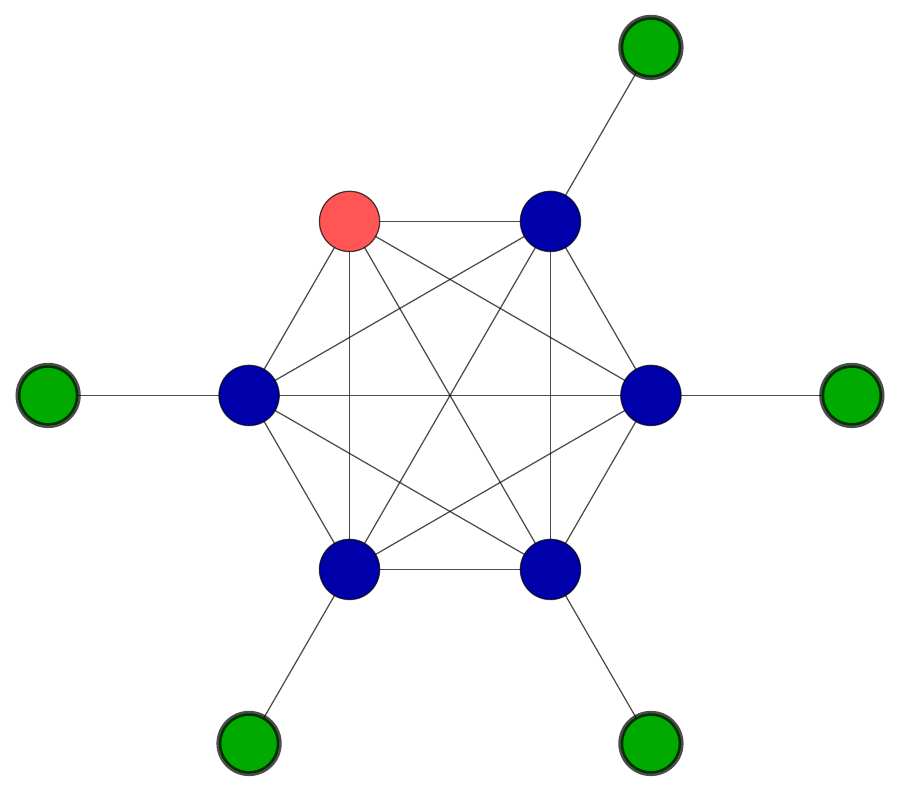}}
\par\end{centering}
\begin{centering}
\subfloat[]{\includegraphics[scale=0.35]{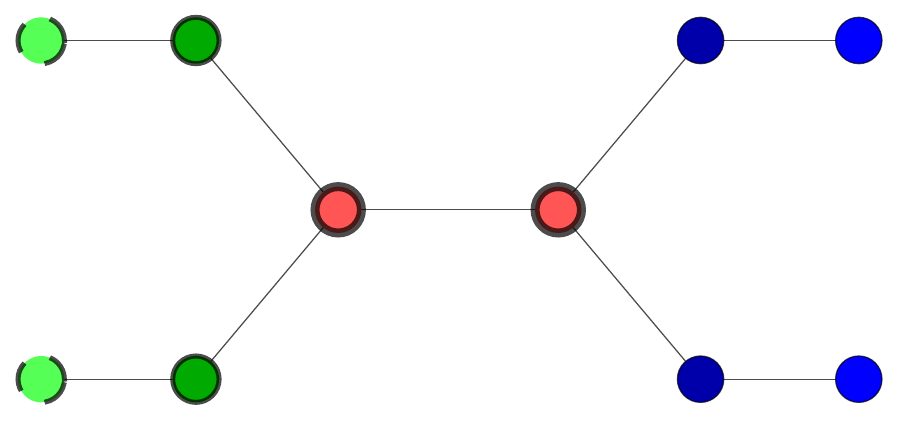}}$\overset{\text{LC}}{\longleftrightarrow}$\subfloat[]{\includegraphics[scale=0.35]{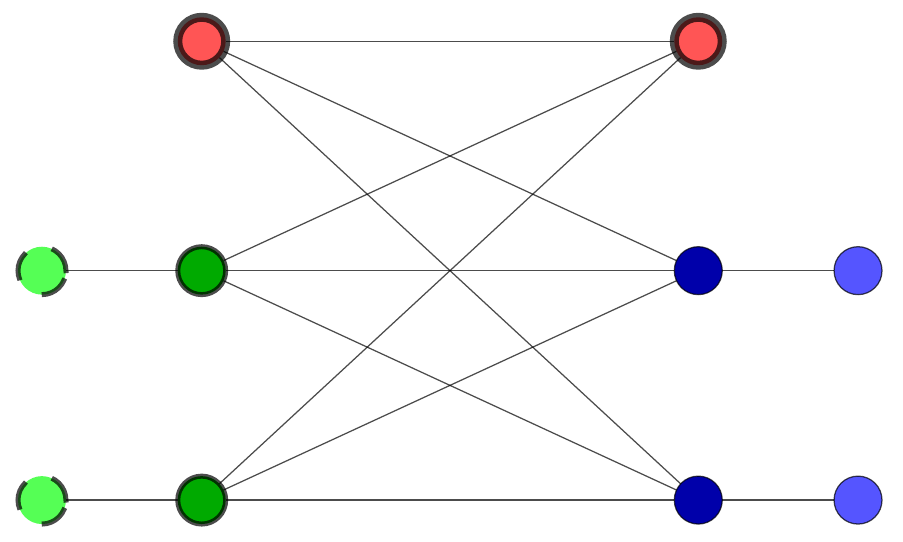}}
\par\end{centering}
\centering{}\caption{Imperfect repeater graph states (b, d) and their LC-equivalent graph
states (a, c). Local complementations at the red (thicker edge) vertices.
The more detailed transformation from c to d follows the steps in
Fig. \ref{fig:bicproc}. \label{fig:Imperfect-repeater-graph}}
\end{figure}

\subsection{More on bicliques \label{subsec:Bicliques}}

Although complete graphs were introduced in \citep{Azuma2015} as
the core of the all-photonic quantum repeater, bicliques may freely
be used, as the protocol does not require connections among the left
and among the right halves of the states. This was noticed independently
in \citep{Russo2018}. As a result, a biclique of $n$ vertices does
away with $n\left(n-2\right)/4$ edges from the complete graph without
affecting the functionality of the repeater. A reduction in edges
does not imply a reduction in complexity, however; after all, the
complete graph, with $n\left(n+1\right)/2$ edges, has only one other
state in its LC orbit (up to a permutation of vertices), a simple
star graph with $n-1$ edges. But an $n$-vertex biclique is locally
equivalent a ``binary star,'' which we named as such because it
resembles the fusion of two half-sized stars of $n/2$ vertices. These
states too have $n-1$ edges, offering a comparable complexity for
possible advantages.

For one, there might be an experimental graph state generation protocol
for which fewer resources are required to implement the fusion of
two stars rather than the creation of one large star. Such a fusion
operation exists for photons, for example \citep{Browne2005}, and
photonic repeater graph states can be created through linear optics
\citep{Azuma2015,Varnava2006,Varnava2007,Varnava2008} or through
deterministic emitters \citep{Lindner2009,Economou2010,Buterakosa,Schwartz2016}.
The discussion in the previous paragraph is relevant to protocols
involving multiple solid-state emitters, as the number of edges in
a graph state is related to the number of CZ gates required to couple
the quantum dots \citep{Buterakosa,Russo2018}. 

Another small benefit of the biclique is in the degree of its entanglement.
Two functions quantifying entanglement for graph states are \citep{Briegel2001,Eisert2001,Hein2006a}:
\begin{itemize}
\item \emph{Pauli persistency}: \emph{PP$\left(\left|G\right\rangle \right)$},
the minimal number of local Pauli measurements required to completely
disentangle $\left|G\right\rangle $.
\item \emph{Schmidt measure}: $E_{S}\left(\left|G\right\rangle \right)=\log_{2}$$\left(\min_{\left|G\right\rangle }k\right)$,
where the minimum is taken over all decompositions
\[
\left|G\right\rangle =\sum_{i=1}^{k}\alpha_{i}\left|g_{i}^{1}\right\rangle \otimes\ldots\otimes\left|g_{i}^{n}\right\rangle 
\]
for $\alpha_{i}\in\mathbb{C}$, $\left|g_{i}\right\rangle \in\mathbb{C}^{2}$,
and $V_{G}=\left\{ 1,\ldots,n\right\} $.
\end{itemize}
Pauli Persistency is a metric by which cluster states – graph states
corresponding to lattices in some dimension – were introduced as a
powerful resource for quantum computation \citep{Briegel2001}. It
provides an upper bound for the Schmidt measure – a proper entanglement
monotone – and the measures coincide for trees, that is, connected
graphs without cycles \citep{Eisert2001,Hein2006a}. Furthermore,
we need only consider Pauli $Z$ measurements for calculating the
Pauli Persistency of tree graph states \citep{Hein2006a}. Because
it takes at least one $Z$ measurement on the central qubit of the
star graph state, which is a kind of tree graph, to disentangle it,
and because the Pauli Persistency and the Schmidt measure are invariant
under local Clifford unitaries, we conclude that $PP=E_{S}=1$ for
star graph and complete graph states. Similarly, since at least two
$Z$ measurements on the central qubits of the binary star graph state
are required disentangle it, $PP=E_{S}=2$ for binary star graph and
biclique states. Although these numbers are small – $n$-qubit cluster
states have a persistency (and Schmidt measure) of $\left\lfloor \frac{n}{2}\right\rfloor $,
for comparison – bicliques offer a modest improvement in the robustness
of their entanglement over complete graphs. If one star with $n/2$
qubits (i.e. an $\frac{n}{2}$-GHZ state) is decoupled from the graph,
another remains.

Complete bipartite graphs also feature in Rudolph's treatise on silicon-photonic
quantum computing \citep{Rudolph2016b}. Rudolph does not suggest
a biclique as a replacement for the complete graph to underlie the
all-photonic quantum repeater, but he describes a \emph{crazy graph
}(Fig. \ref{fig:crazygraph3}) that can be used for a loss-tolerant
encoding scheme. In fact, our result on bipartite graph states applies
to crazy graph states of small sizes:
\begin{claim}
$\text{LU}\Leftrightarrow\text{LC}$ holds for crazy graph states
with two or three columns.
\end{claim}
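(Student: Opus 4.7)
The plan is to reduce each case to the already-established result for generalized bicliques (Claim~\ref{claim1}) by exhibiting an explicit graph isomorphism.

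The two-column case is essentially immediate from the definition of a crazy graph: a crazy graph with columns of sizes $m$ and $n$ has edges precisely between every pair of vertices belonging to different columns, and nothing else. That is the same graph as the generalized biclique $\left|\bar{B}^{m,n}\right\rangle$, so Claim~\ref{claim1} applies directly.

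For the three-column case I would argue as follows. Label the columns $A$, $B$, $C$ with $|A|=m$, $|B|=n$, $|C|=p$. The edge set is $(A\times B)\cup(B\times C)$, with no edges inside any column and no $A$--$C$ edges. Now consider the bipartition $(B,\,A\cup C)$: every edge of the graph crosses this bipartition, the bipartition is complete (every $b\in B$ is joined to every vertex of $A\cup C$), and there are no edges within $B$ nor within $A\cup C$. Hence the three-column crazy graph is isomorphic to the generalized biclique on parts of sizes $n$ and $m+p$, and Claim~\ref{claim1} again yields $\text{LU}\Leftrightarrow\text{LC}$.

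The main obstacle is purely conceptual: spotting the bipartition $(B,\,A\cup C)$ in the three-column case. Once that is done, the reduction is mechanical. I would also point out why the argument stops at three columns: with four or more columns one can still partition the vertex set into an independent bipartition (odd columns versus even columns), but the two parts are no longer completely joined --- a vertex in column $A$ is not adjacent to a vertex in column $D$ --- so the crazy graph ceases to be a biclique and a genuinely different strategy (for instance a direct sequence of local complementations, or an appeal to Result~\ref{enu:cycle} after some simplification) would be required.
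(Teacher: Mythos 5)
Your proposal is correct and follows essentially the same route as the paper: the two-column case is a biclique by definition, and the three-column case is recognized as the asymmetrical biclique with bipartition (middle column, union of outer columns), after which Claim~\ref{claim1} applies. Your remark on why the argument fails for four or more columns also matches the paper's own discussion.
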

\begin{proof}
A crazy graph with two columns is simply a symmetrical bipartite graph.
For a crazy graph with three columns, call the vertex sets $L$, $M$
and $R$ for left, middle and right. Suppose there are $m$ vertices
in each column. Notice first that the vertices within $L$, $M$ and
$R$ are disconnected. Notice further that no vertex in $L$ is adjacent
to a vertex in $R$, whereas all the vertices in $M$ are adjacent
to all the vertices in $L\cup R$. Thus we may rewrite this crazy
graph as an asymmetrical biclique state, $\left|\bar{B}^{2m,m}\right\rangle $
(Fig. \ref{fig:crazygraph3bip}), and then use Claim \ref{claim1}.
\end{proof}
If we add more columns to the crazy graph, it still has a bipartite
embedding (position all the odd columns as left vertices, and all
the even columns as right vertices), but it is no longer a biclique,
and so we may not use our results from above. Similarly, the leaves
in repeater graphs ensure that the best bipartition is not complete.

We note next that bipartite graphs are the \emph{two-colourable }graphs,
meaning their vertices can be painted in two different colours such
that no two vertices of the same colour are adjacent. These states
are locally equivalent to \emph{Calderbank-Shor-Steane }(CSS) states
that arise in quantum error correction \citep{Calderbank1996,Chen2004,Steane1996}
and feature in schemes for multipartite cryptography \citep{Chen2004}
and entanglement purification \citep{Aschauer2005}. In fact, Sarvepalli
and Raussendorf in \citep{Sarvepalli2010a} show that $\text{LU}\Leftrightarrow\text{LC}$
holds for a certain class of CSS states. However, these CSS states
are associated to classical error-correcting codes of distance (and
dual distance) of 3 or greater, to which bicliques do not belong,
as we now show:
\begin{claim}
The biclique state corresponds to a CSS code of either distance or
dual distance $2$. Therefore we may not rely on Corollary 3 in \citep{Sarvepalli2010a}
to check if biclique states satisfy $\text{LU}\Leftrightarrow\text{LC}$.
\label{csscodeclaim}
\end{claim}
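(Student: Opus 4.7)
The plan is to convert the biclique graph state to its CSS form and then exhibit explicit weight-2 codewords in the underlying classical linear codes.

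First, I would invoke the standard fact that any two-colorable graph state is LC-equivalent to a CSS state: conjugating the graph-state stabilizers $K_a = X^a \bigotimes_{b \in N_a} Z^b$ by Hadamards on one entire color class converts them into purely $X$-type generators on that class and purely $Z$-type generators on the other. Applied to the biclique state $\left|\bar{B}^{m,n}\right\rangle$ with left part $A$ of size $m$ and right part $B$ of size $n$, conjugating by Hadamards on all of $B$ yields the $X$-type generators $\{X^a \bigotimes_{b \in B} X^b : a \in A\}$ and the $Z$-type generators $\{Z^b \bigotimes_{a \in A} Z^a : b \in B\}$. Reading these off as rows of the check matrix gives two classical linear codes $C_X, C_Z \subseteq \mathbb{F}_2^{m+n}$; commutativity together with dimension counting forces $C_Z = C_X^{\perp}$, so the relevant quantities are the distance $d(C_X)$ and the dual distance $d(C_X^{\perp}) = d(C_Z)$.

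Next, I would exhibit weight-2 codewords directly. The generators of $C_X$ all share the common tail $\mathbf{1}_B$, so summing any two of them gives $(e_a + \mathbf{1}_B) + (e_{a'} + \mathbf{1}_B) = e_a + e_{a'}$, a codeword of weight 2; the symmetric argument with $A$ and $B$ swapped does the same for $C_Z$. A brief case analysis on the parity of $|S|$ for arbitrary codewords $\sum_{a \in S} (e_a + \mathbf{1}_B)$ rules out weight-1 codewords in both codes, so $d(C_X) = d(C_Z) = 2$. In particular, neither is at least $3$ -- precisely the hypothesis needed to invoke Corollary 3 of \citep{Sarvepalli2010a} -- so that corollary cannot be applied to biclique states.

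The main obstacle, and it is a relatively minor bookkeeping issue, is making sure that the translation between the graph state and the Sarvepalli--Raussendorf CSS picture is performed with consistent conventions, so that the classical codes whose distances I have just computed really do correspond to the codes appearing in their corollary. Once this dictionary is set up, the existence of the weight-2 codeword is immediate from the observation that every vertex on one side of the biclique shares the same neighborhood -- the entire opposite side -- so pairwise sums of stabilizer generators on that side cancel perfectly.
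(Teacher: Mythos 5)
Your proof is correct and follows essentially the same route as the paper's: both apply Hadamards to one colour class to reach the CSS form and then extract a weight-2 codeword from the fact that all vertices on one side share the same neighbourhood (the paper phrases this as repeated columns of the parity-check matrix $\left[I_{m}\vert A\right]$, you as cancelling sums of generators --- the same observation applied to the dual pair of codes). The only nit is that your closing assertion $d(C_X)=d(C_Z)=2$ overstates matters in the degenerate case $\min(m,n)=1$, where only one of the two codes has distance 2; this is precisely why the paper states the claim as ``either distance or dual distance $2$,'' and your argument still yields that weaker conclusion, which is all that is needed to rule out the hypothesis of Corollary 3 of Sarvepalli and Raussendorf.
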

\begin{proof}
Consider a biclique of $m$ qubits on the left and $n$ qubits on
the right. We can write the\emph{ }check matrix of the stabilizer
of this state as follows:
\[
\left[\begin{array}{cc|cc}
I_{m} & 0 & 0 & A\\
0 & I_{n} & A^{T} & 0
\end{array}\right],
\]
where $I_{\alpha}$ is the $\alpha\times\alpha$ identity matrix,
$\alpha\in\left\{ m,n\right\} $, and $A$ is an $m\times n$ matrix
consisting only of 1s. Here we use the convention that the $X$ $\left(Z\right)$
operators correspond to the left (right) half of the matrix. Following
the argument in Claim 2 of \citep{Chen2004} backwards, we may apply
the Hadamard matrix to the final $n$ vertices and rearrange the columns
to get the new check matrix
\[
\left[\begin{array}{cc|cc}
I_{m} & A & 0 & 0\\
0 & 0 & A^{T} & I_{n}
\end{array}\right].
\]

But this is a check matrix of a CSS state, since it corresponds to
the stabilizer whose elements are products of $X$ and the the identity
for the first $m$ qubits, and $Z$ and the identity for the following
$n$ qubits \citep{Gottesman1997}. From here, we can read off the
parity check matrix for the code, $H\left(C\right)=\left[I_{m}\vert A\right]$.
But, from \citep{Nielsen2010}, we know that the distance of the code
is the minimum number of linearly dependent column vectors in the
parity check matrix. Since $I_{m}$ always has $m$ linearly independent
vectors, and $A$ consists of $n$ columns of 1s, we conclude that
$d=2$ for $n\geq2$. If $n=1$, we can look at the parity matrix
$H\left(C^{\perp}\right)=\left[A^{T}\vert I_{n}\right]$ of the dual
code. This matrix has $m+1$ columns of 1s, meaning the distance of
$C^{\perp}$ – the dual distance – is 2.
\end{proof}
\begin{figure}
\begin{centering}
\subfloat[Crazy graph state with three columns and $m=3$ qubits per column.
\label{fig:crazygraph3}]{\includegraphics[scale=0.35]{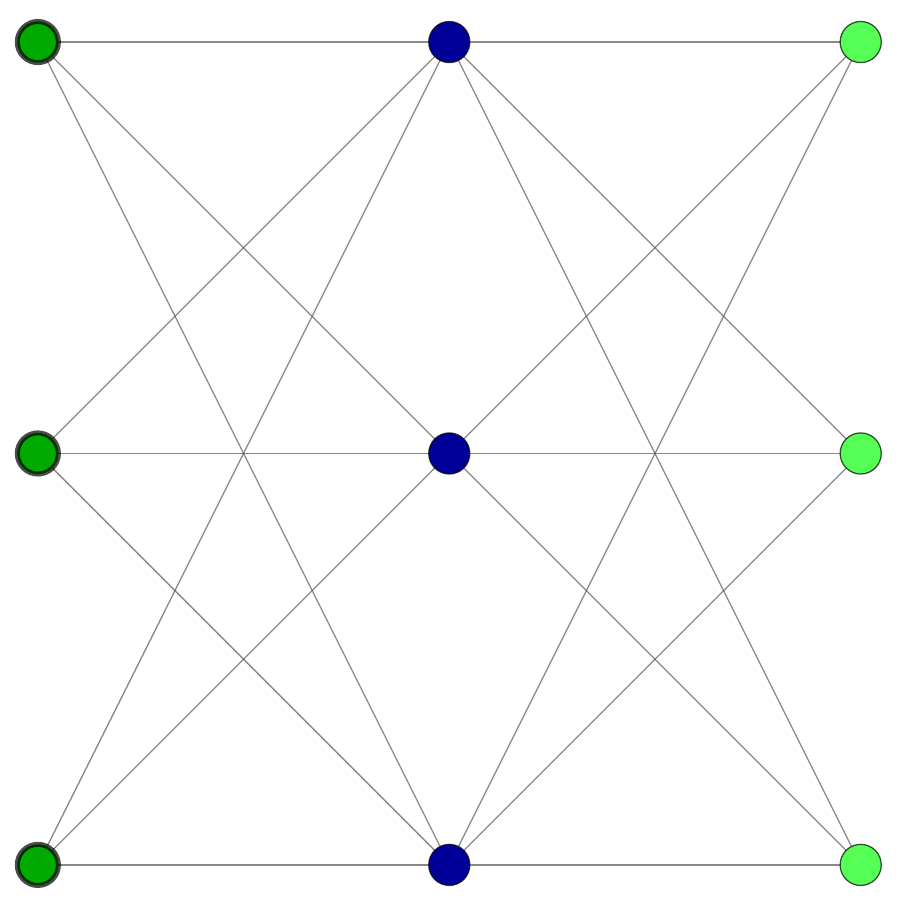}}$\quad\equiv\quad$\subfloat[Crazy graph state from (a) with a bipartite embedding. This is an
asymmetrical biclique state, $\left|\bar{B}^{2m,m}\right\rangle $,
with $m=3$. \label{fig:crazygraph3bip}]{\includegraphics[scale=0.4]{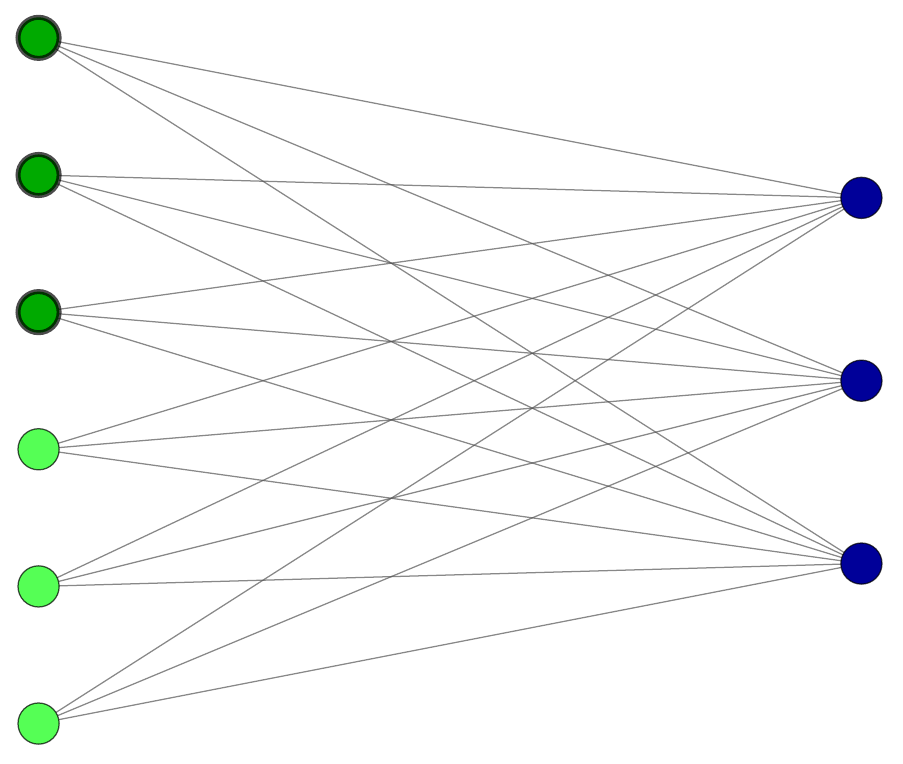}}
\par\end{centering}
\caption{Crazy graph state from \citep{Rudolph2016b}, used for loss-tolerant
encoding. \label{fig:Crazy-graph}}
\end{figure}

\subsection{Low-rank stabilizer states \label{subsec:counterex}}

We conclude Section \ref{sec:Results} by clarifying what is meant
by Result \ref{enu:rankstab} in Section \ref{subsec:Review-of-results}
and whether it applies to our discussion. In \citep{Ji2007}, Ji et
al. were the first to provide a random search algorithm for generating
counterexamples to the LU-LC Conjecture. Their result relies on a
reduction of the LU-LC problem to one involving quadratic forms on
linear subspaces of $\mathbb{F}_{2}^{n}$, the $n$-bit binary field
\citep{Gross2007a}. In \citep{Dehaene,Ji2007}, it was shown that
any stabilizer state can be written as 
\begin{equation}
\left|S\right\rangle =\frac{1}{\sqrt{\left|S\right|}}\sum_{x\in S}i^{l\left(x\right)}\left(-1\right)^{q\left(x\right)}\left|x\right\rangle ,\label{eq:stabin}
\end{equation}
where $l$ is linear, $q$ is quadratic, and $S$ is a linear subspace
of $\mathbb{F}_{2}^{n}$. Ji et al. provide an explicit counterexample
in the case $S$ is a rank $r=6$ subspace of $\mathbb{F}_{2}^{27}$,
and they prove that no stabilizer state corresponding to a smaller-rank
subspace can yield counterexamples. To see the implication for graph
states, consider the following: A stabilizer state can viewed as a
$[\left[n,0,d\right]]$ quantum code \citep{Cross2009b}, and we can
interpret $r$ to be the rank of the $X$ part of the check matrix
of this code \citep{Dehaene}, $T=\left[X\vert Z\right]$. (Actually,
since applying the Hadamard gate to all the qubits swaps the $X$
and the $Z$ parts of $T$, we may also interpret $r$ to be rank
of the $Z$ part of the matrix). As in \citep{Nielsen2010}, we can
perform row reduction on $X$ to obtain
\begin{equation}
T^{\left(1\right)}=\left[\begin{array}{cc|cc}
I_{r} & A & B & C\\
0 & 0 & D & E
\end{array}\right],
\end{equation}
where $A$ is an $r\times\left(n-r\right)$ matrix, followed by row
reduction on $E$ to obtain,
\begin{equation}
T^{\left(2\right)}=\left[\begin{array}{cc|cc}
I_{r} & A & B' & 0\\
0 & 0 & D' & I_{n-r}
\end{array}\right].
\end{equation}

Then, applying Hadamard gates to the final $n-r$ qubits, we have
\begin{equation}
T^{\left(3\right)}=\left[\begin{array}{cc|cc}
I_{r} & 0 & B' & A\\
0 & I_{n-r} & D' & 0
\end{array}\right].
\end{equation}

The stabilizer commutativity condition, $T\Lambda T^{T}=0$, where
$\Lambda=\begin{bmatrix}0 & I_{n}\\
I_{n} & 0
\end{bmatrix}$, implies the right part of $T^{\left(3\right)}$ is symmetric, so
that $D'=A^{T}$ and $B'=B'^{T}$. Finally, we may set the appropriate
qubits of $B'$ to 0, as in \citep{Nest2003}, to obtain $T^{\left(4\right)}=\left[I_{n}\vert\Gamma\right]$,
where
\begin{equation}
\Gamma=\begin{bmatrix}M & A\\
A^{T} & 0
\end{bmatrix}.
\end{equation}
Notice that $\Gamma$ is symmetric and has zeros along its diagonal,
meaning it is an adjacency matrix of some graph, $G_{S}$. Furthermore,
from $T\to T^{\left(4\right)}$ all we have done is apply local Clifford
operations and a basis change. This means \citep{Nest2003} that the
graph state described by the stabilizer matrix $T^{\left(4\right)}$
is LC-equivalent to $T$, and hence the corresponding graph state
$\left|G_{S}\right\rangle $ is LC-equivalent to $\left|S\right\rangle $.
When $M\neq0$, $\Gamma$ corresponds to a graph with some edges among
the qubits $R=\left\{ 1,\ldots,r\right\} $ but none among $L=\left\{ r+1,\ldots,n\right\} $.
When $M=0$, $\Gamma$ represents a bipartite graph with a bipartition
$\left(L,R\right)$. For a fixed stabilizer subspace, $S$, the topology
of the graph – in other words, the exact form of $A$ and $M$ – depends
on the functions $l\left(x\right)$ and $q\left(x\right)$ in Eq.
\ref{eq:stabin}. We can make some general statements, however.

First, let us disambiguate some confusion that exists in physics literature
(e.g. \citep{Hein2006a,Cabello2009,Cosentino2009}) regarding the
term ``binary rank.'' For this, we enumerate three kinds of ranks
of binary matrices that are used in computer science and mathematics
literature \citep{Gregory1991,Watson2015}. Assume $A$ is an arbitrary
$r\times\left(n-r\right)$ matrix with entries in $\left\{ 0,1\right\} $,
and that $A$ is associated with the upper-right block of the adjacency
matrix of a bipartite graph, $G_{A}$, with bipartition $\left(R,L\right)$:
i.e., its rows represent the vertices in $R$, and the columns represent
the ones in $L$. Without loss of generality, suppose $r\leq n-r$.
Then, consider the following definitions:
\begin{itemize}
\item \emph{Binary rank} ($\text{rank}_{\mathbb{C}}$): Normal arithmetic
(row reduction), done in quantum information over $\mathbb{C}$. Equal
to the \emph{biclique partition number}, $bp$: the least number of
bicliques needed to partition every edge of $G_{A}$. This means every
edge of $G_{A}$ is in \emph{exactly one} biclique of the partition
\citep{Gregory1991}. 
\item \emph{Boolean rank} ($\text{rank}_{B}$): Boolean algebra, with addition
and multiplication defined through
\[
\begin{array}{c|cc}
\oplus & 0 & 1\\
\hline 0 & 0 & 1\\
1 & 1 & 1
\end{array}\qquad\begin{array}{c|cc}
\odot & 0 & 1\\
\hline 0 & 0 & 0\\
1 & 0 & 1
\end{array}
\]

Equal to the \emph{biclique cover number}, \emph{bc: }the least number
of bicliques needed to cover every edge of $G_{A}$. This means every
edge of $G_{A}$ is in \emph{at least one} biclique of the cover \citep{Gregory1991}.
\item \emph{XOR rank} ($\text{rank}_{X}$):\emph{ }Algebra over $\mathbb{F}_{2}$
(i.e. mod-2 arithmetic), with addition and multiplication defined
through \citep{Watson2015}:
\[
\begin{array}{c|cc}
\oplus & 0 & 1\\
\hline 0 & 0 & 1\\
1 & 1 & 0
\end{array}\qquad\begin{array}{c|cc}
\odot & 0 & 1\\
\hline 0 & 0 & 0\\
1 & 0 & 1
\end{array}.
\]
\end{itemize}
Generally one really means the XOR rank, rather than the binary rank,
when dealing with stabilizer operations, since arithmetic is performed
modulo 2. Because a set of vectors linearly independent over $\mathbb{C}$
might not be over $\mathbb{F}_{2}$, but the opposite is impossible,
we have that
\begin{equation}
\text{rank}_{\text{X}}\left(A\right)\leq\text{rank}_{\mathbb{C}}\left(A\right)=bp\left(A\right)\leq r.
\end{equation}

Hence the XOR rank of $A$ is bounded above by the biclique partition
number. For bicliques, we saw in the proof to Claim \ref{csscodeclaim}
that $\text{rank}_{X}\left(A\right)$ =1, and $bp=1$ by definition.
This inequality, of course, does not preclude bicliques as counterexamples
to LU-LC with the approach in \citep{Ji2007}. If we would like to
say a little more about the relationship between matrix $A$ and the
decomposition (\ref{eq:stabin}), consider the following: From \citep{Hein2006a},
we know $\text{rank}_{X}$ is equal to the \emph{Schmidt rank}, $k_{S}^{R}$,
of the graph state $\left|G_{A}\right\rangle $ with respect to the
bipartition $\left(R,L\right)$. According to \citep{Cosentino2009},
we may relate the Schmidt rank to $w\left(\left|G\right\rangle \right)$,
the number of minus signs in the computational basis decomposition
of the graph state:
\begin{equation}
k_{S}^{R}=n-\log_{2}\left(2^{n-1}-w\left(\left|G_{A}\right\rangle \right)\right)-1.
\end{equation}
Alternatively, the authors in \citep{Hein2006a} also show that $k_{S}^{R}=\left|R\right|-\text{rank}\left(S_{R}\right)=r-\log_{2}\left|S_{R}\right|$,
where $S_{R}$ denotes the subgroup of the stabilizer generated by
elements whose support is in $R$. Comparing the expressions, we see
that
\begin{align}
\left|S_{R}\right|= & 2^{r}-2^{-\left(n-r-1\right)}w\left(\left|G_{A}\right\rangle \right).
\end{align}
Considering our discussion, then, we have
\begin{equation}
r-\log_{2}\left|S_{R}\right|\leq bp\left(A\right)\leq r,\label{eq:ineq}
\end{equation}

where $R=\left\{ 1,\ldots,r\right\} $. This means that counterexamples
generated by Ji et al.'s procedure, if LC-equivalent to bipartite
graphs, will have a biclique partition number between $r-\log_{2}\left|S_{R}\right|$
and $r$. The range of the biclique partition number can then be determined
by counting the number of minus signs in the standard basis decomposition.
This would require seeing how the negative signs in the coefficients
(\ref{eq:stabin}) change under the local Clifford gate that takes
$\left|S\right\rangle $ to a graph state.

Finally, we see that the above observations are consistent with the
examples given by Ji et al.: there, the authors provide two $n=27$
qubit graph states. One of the graphs is bipartite, with a bipartition
of $r=6$ by $21=n-r$ vertices; the other has an extra edge among
the vertices $1,\ldots,6$. The graphs are proven to be LU equivalent
but not LC equivalent. In \citep{Guhne2017}, it's noted that certain
sequences of local complementations on $\left|G_{S}\right\rangle $
keep it bipartite and keep the bipartitions the same size, indicating
that, perhaps, a necessary condition for counterexamples is $\left|S_{R}\right|=1\implies k_{S}^{R}=r$,
and $w\left(\left|G_{S}\right\rangle \right)=2^{n-1}\left(1-2^{-r}\right)$,
where $r\ge6$ is the rank of the stabilizer support.

\section{Discussion}

Although analytical approaches to questions of local equivalence of
general graph states are difficult, headway can be made by considering
special subsets of graph states. Here we have shown that looking at
local Clifford operations only is sufficient to understand the local
unitary equivalence of complete graph states, generalized biclique
states, imperfect all-photonic quantum repeater graph states and small
crazy graphs, as displayed in Figures \ref{fig:GHZs} to \ref{fig:Crazy-graph}.
Because $\text{LU}\Leftrightarrow\text{SLOCC}$ holds for graph states
generally \citep{Hein2006a}, this has repercussions for an even broader
class of local operations. In turn, for quantum information protocols
involving the aforementioned states, one headache of identifying locally
equivalent states is relieved.

For bicliques, we have also demonstrated the failure of certain approaches
to answer the $\text{LU}\Leftrightarrow\text{LC}$ question: namely,
that these states do not satisfy the Minimal Support Condition in
\citep{Nest2008}, are not $\text{LC}$-equivalent to the class of
CSS codes described in \citep{Sarvepalli2010a}, and do not necessarily
correspond to low-rank stabilizer subspaces in \citep{Ji2007} that
preclude LU-LC counterexamples. We have also provided the LC orbit
of a biclique and shown certain advantages in using this state to
underlie the all-photonic quantum repeater \citep{Azuma2015}. These
boons carry over to other protocols involving a time-reversed entanglement
swapping that is the crux of the all-photonic repeater protocol \citep{Azuma2015}. 

Though we have demonstrated that bicliques satisfy $\text{LU}\Leftrightarrow\text{LC}$,
we saw that the result does not hold for bipartite graphs more generally.
The first explicit counterexample to the LU-LC Conjecture in \citep{Ji2007}
is a 27-qubit state corresponding to a rank-6 stabilizer subspace.
We have striven to clarify the relationship between this rank and
the properties of the corresponding bipartite graph state, but there
is room for further investigation. In \citep{Guhne2017}, Tsimaskuridze
and Gühne highlighted the bipartition of Ji et al.'s state and exploited
it to construct their own counterexamples. They used graphical rules
on non-LC operations aided by so-called \emph{hypergraph }states,
which generalize graph states by allowing an edge to connect to an
arbitrary number of vertices. Whether there are counterexamples below
27 qubits, whether there are more efficient and systematic ways of
generating them, and whether they have convincing physical interpretations
are big open questions in graph state theory that hypergraphs might
help to answer. 

More humbly, it would be interesting to determine whether $\text{LU}\Leftrightarrow\text{LC}$
holds for perfect repeater graph states based either on complete graphs
or a bicliques, and for crazy graph states of an arbitrary number
of columns.
\begin{acknowledgments}
The author would like to thank Hoi-Kwong Lo, Eli Bourassa, Saikat
Guha, Aharon Brodutch, and Koji Azuma for helpful discussions. This
work is financially supported by the Ontario Graduate Scholarship
and the Natural Sciences and Engineering Research Council of Canada
(NSERC).
\end{acknowledgments}

\bibliographystyle{apsrev4-1}
\bibliography{library}

\end{document}